\def\Ac{{\mathcal A}}
\def\Ec{{\mathcal E}}
\def\Gc{{\mathcal G}}
\def\Hc{{\mathcal H}}
\def\Lc{{\mathcal L}}
\def\Mc{{\mathcal M}}
\def\Nc{{\mathcal N}}
\def\Qc{{\mathcal Q}}
\def\Rbb{{\mathbb R}}
\def\Sc{{\mathcal S}}
\def\Sbb{{\mathbb S}}
\def\Tc{{\mathcal T}}
\def\Vc{{\mathcal V}}
\def\0{{\bf 0}}
\newcommand{\bitem}{\begin{itemize}}
\newcommand{\eitem}{\end{itemize}}
\newcommand{\btabular}{\begin{tabular}}
\newcommand{\etabular}{\end{tabular}}
\newcommand{\bcenter}{\begin{center}}
\newcommand{\ecenter}{\end{center}}
\newcommand{\bea}{\begin{eqnarray}}
\newcommand{\eea}{\end{eqnarray}}
\newcommand{\bean}{\begin{eqnarray*}}
\newcommand{\eean}{\end{eqnarray*}}
\newcommand{\ba}{\left[ \begin{array}}
\newcommand{\ea}{\\ \end{array} \right]}
\newcommand{\bear}{\begin{array}}
\newcommand{\eear}{\\ \end{array}}
\newcommand{\non}{\nonumber}
\newcommand{\ra}{\rightarrow}
\newcommand*{\QEDB}{\hfill\ensuremath{\blacksquare}}%
\newcommand*{\QET}{\hfill\ensuremath{\triangleleft}}
\newcommand{\norm}[1]{\left\lVert#1\right\rVert}
\newcounter{subequation}
\def\beasub{\addtocounter{equation}{+1}
\setcounter{subequation}{\value{equation}}
\setcounter{equation}{0}
\renewcommand{\theequation}{\arabic{subequation}\alph{equation}}
\begin{eqnarray}}
\def\eeasub{\end{eqnarray}
\setcounter{equation}{\value{subequation}}
\renewcommand{\theequation}{\arabic{equation}}}
\def\inf{\operatornamewithlimits{inf\vphantom{p}}}
\newtheorem{Lemma}{Lemma}
\newtheorem{Theorem}{Theorem}
\newtheorem{Definition}{Definition}
\newtheorem{Assumption}{Assumption}
\newtheorem{Remark}{Remark}
\newtheorem{Problem}{Problem}
\newtheorem{Proposition}{Proposition}
\newtheorem{Corollary}{Corollary}
\newcommand{\multiline}[1]{%
  \begin{tabularx}{\linewidth-\ALG@thistlm-0.0cm}[t]{@{}X@{}}
    #1
  \end{tabularx}
}
\def\BibTeX{{\rm B\kern-.05em{\sc i\kern-.025em b}\kern-.08em
T\kern-.1667em\lower.7ex\hbox{E}\kern-.125emX}}
\begin{document}
\title{Scalable and Optimal Security Allocation in Networks against \\ Stealthy Injection Attacks
}
\author{Anh Tung Nguyen, Sribalaji C. Anand, André M. H. Teixeira 
\thanks{This work is supported by the Swedish Research Council under the
grants 2021-06316, 2024-00185  
and by the Swedish Foundation for Strategic Research.}
\thanks{Anh Tung Nguyen and Andr{\'e} M. H. Teixeira are with the Department of Information Technology, Uppsala University, PO Box 337, SE-75105, Uppsala, Sweden (e-mail: \{anh.tung.nguyen, andre.teixeira\}@it.uu.se).}
\thanks{ Sribalaji C. Anand is with the School of Electrical Engineering and Computer Science, KTH Royal Institute of Technology, 1827 Stockholm, Sweden (e-mail: srca@kth.se). } 
}
	
\maketitle
	
\begin{abstract}
This paper addresses the security allocation problem in a networked control system under stealthy injection attacks. The networked system is comprised of interconnected subsystems which are represented by nodes in a digraph. An adversary compromises the system by injecting false data into several nodes with the aim of maximally disrupting the performance of the network while remaining stealthy to a defender. 
To minimize the impact of such stealthy attacks, the defender, with limited knowledge about attack policies and attack resources, allocates several sensors on nodes to impose the stealthiness constraint governing the attack policy.
We provide an optimal security allocation algorithm to minimize the expected attack impact on the entire network.
Furthermore, under a suitable local control design, the proposed security allocation algorithm can be executed in a scalable way.
Finally, the obtained results are validated through several numerical examples.
\end{abstract}
	
\begin{IEEEkeywords}
Cyber-physical security, networked control system, stealthy attack, dissipative systems.
\end{IEEEkeywords}
\section{Introduction}
Networked control systems are prevalent in modern society and found in critical infrastructure such as power grids, transportation systems, and water distribution networks \cite{molzahn2017survey,polycarpou2023smart,conti2021survey}. 
These systems commonly operate 
through open 
communication technologies such as the public internet and wireless communications, making them susceptible to cyber attacks \cite{teixeira2015secure,falliere2011w32,kshetri2017hacking}. Such attacks can have dire financial and societal repercussions. Notable examples include the Iranian industrial control system and the Ukrainian power grid, which suffered devastating impacts from  Stuxnet in 2010 \cite{falliere2011w32} and Industroyer in 2016 \cite{kshetri2017hacking}, respectively. Consequently, the importance of security in control systems has reached an unprecedented level in response to these significant threats.

In an attempt to handle cyber attacks, various studies find defense mechanisms by solving optimization problems that formulate the objectives of the malicious adversary and the defender simultaneously. By casting the security allocation problem in the Stackelberg game framework, the authors in \cite{li2018false} aim to find the game equilibrium by formulating a single optimization problem that accounts for all admissible adversarial actions with a simplified linear mapping from action spaces to game payoffs.
Similarly, Yuan et al. also seek the Stackelberg equilibrium by considering all admissible adversarial actions in the context of industrial control systems in \cite{yuan2019stackelberg}. 
It is worth noting that such a linear mapping may not cover advanced attack policies such as stealthy attacks \cite{umsonst2021bayesian}, which will be addressed in this paper.
The work in \cite{shukla2022robust} adopts a slightly different game setup, where both the defender and the adversary have discrete action spaces. The setup allows for the use of the traditional backward induction to determine the Stackelberg equilibrium. However, this requires evaluating all admissible players' actions, leading to very high computational complexity, as noted in \cite[Section V]{shukla2022robust}. Instead of using the Stackelberg game framework, 
the authors in \cite{umsonst2021bayesian,pirani2021game} allow the malicious adversary and the defender make their decisions simultaneously. It is worth noting that a pure equilibrium in such a scenario does not necessarily exist. Alternatively, a mixed-strategy equilibrium notation was employed to outline the defense strategy, 
which is also known as moving target defense. As presented in \cite{umsonst2021bayesian}, the mixed-strategy equilibrium requires an explicit game matrix that accommodates all the possible pairs of players' actions, probably resulting in a high computational complexity.

One interesting observation among the existing studies \cite{li2018false,yuan2019stackelberg,shukla2022robust,umsonst2021bayesian,pirani2021game} is that no analysis and computation of the performance loss caused by stealthy attacks (also henceforth referred to as attack impact) on the network has been made. 
This issue has been addressed in our previous study \cite{nguyen2024security} by adopting the output-to-output gain security metric \cite{teixeira2015strategic}. Our previous work \cite{nguyen2024security} mainly studies the graph-theoretical condition under which the security metric is well-posed while this paper studies how to efficiently exploit the security metric given its well-posedness. Moreover, this paper extends a single attack node in \cite{nguyen2024security} to multiple attack nodes.
Another observation is that explicitly computing the game payoff for all the admissible players' actions results in an enormous computational complexity. 
These two issues are studied in this paper.

In this paper, we consider a continuous-time networked control system, associated with a strongly connected digraph, under stealthy attacks. 
The system consists of several interconnected one-dimensional subsystems, known as nodes in the digraph. The adversary aims to maximally degrade the global network performance 
by selecting several nodes to execute stealthy data injection attacks on their inputs. 
Concurrently, the defender chooses a set of nodes and monitors their outputs with the goal of detecting and mitigating the impact of attacks. The goal of this paper is to find the optimal set of nodes for the defender in an efficient way. We study this security allocation problem and provide the following contributions: 
\begin{itemize}
    \item The worst-case impact of stealthy attacks on the global network performance is evaluated via a novel Attack-Energy-Constrained Output-to-Output gain security metric.
    \item The well-posedness of the proposed security metric is guaranteed and the metric is shown to be equivalent to a semidefinite program (SDP) problem.
    \item Without explicitly computing a large matrix game as in  \cite{li2018false,yuan2019stackelberg,shukla2022robust,umsonst2021bayesian}, we propose a novel optimization-based method to find the optimal monitor nodes that minimize the security metric.
    \item Further, we show that the optimal monitor nodes can be exactly found in a more scalable optimization problem under a suitable design of local controllers.
\end{itemize}

The remainder of the paper is the following. Section~\ref{sec:problem_formulation} formulates the security allocation problem after presenting the description of the network and the models of the adversary and the defender. We analyze the security allocation problem and propose a method that computes the optimal monitor nodes in Section~\ref{sec:general_solution}. In Section~\ref{sec:impatient_adv}, a scalable method is proposed to find the optimal monitor nodes in a more efficient way, suiting large networks. Section~\ref{sec:simulation} presents numerical examples to validate the obtained results while Section~\ref{sec:conclusion} concludes the paper. We conclude this section by providing the notation to be used throughout this paper.

\textbf{Notation:} $\Rbb^n \, (\Rbb^n_{>0},\,\Rbb^n_{\geq 0})$ and $\Rbb^{n \times m}$ stand for sets of real (positive, non-negative) $n$-dimensional vectors and real $n$-by-$m$ matrices, respectively; 
the set of $n$-by-$n$ symmetric positive semidefinite matrices is denoted as $\Sbb^n_{\succeq 0}$.
We denote $A \succ(\succeq) B$ if $A-B$ is a positive definite (semi-definite) matrix.
An $i$-th column of the $n$-by-$n$ identity matrix is denoted as $e_i$.
For a set $\Ac$, $|\Ac|$ stands for the set cardinality and a collection of variables $(x_{a_1},x_{a_2},\ldots,x_{a_{|\Ac|}})$ where $a_i \in \Ac$ is denoted as $\{x_{a_i}\}_{\forall a_i \in \Ac}$. 
The space of square-integrable  functions is defined as $\Lc_{2} \triangleq \bigl\{f: \Rbb_{>0} \rightarrow \Rbb^n ~|~ \norm{f}^2_{\Lc_2 [0,\infty]} < \infty \bigr\}$ and the extended space be defined as $\Lc_{2e} \triangleq \bigl\{ f: \Rbb_{>0} \rightarrow \Rbb^n ~|~ \norm{f}^2_{\Lc_2 [0,H]} < \infty,~ \forall~ 0 < H < \infty \bigr\} $ where $\norm{f}_{\Lc_2 [0,H]}^2 \triangleq \int_{0}^{H} \norm{f(t)}_2^2 \, \text{d}t$.
The notation $\norm{f}^2_{\Lc_2}$  is used  as shorthand for the norm $\norm{f}_{\Lc_2 [0,H]}^2$ if the time horizon $[0,H]$ is clear from the context.
Let $\Gc \triangleq (\Vc, \Ec, A, \Theta)$ be a digraph with the set of $N$ nodes $\Vc = \{1, 2,...,N\}$, the set of edges $\Ec \subseteq \Vc \times \Vc $, the adjacency matrix $A = [A_{ij}]$, and self-loop gain diagonal matrix $\Theta = \textbf{diag}([\theta_i])$ where \textbf{diag} stands for a diagonal matrix.
Each element $(i,j) \in \Ec, ~i\neq j$, represents a directed edge from $j$ to $i$ and 
the element $A_{ij}$ of the adjacency matrix is positive, and with $(i,j) \notin \Ec$ or $i = j$, $A_{ij} = 0$. 
The in-degree Laplacian matrix is defined as $L = [\ell_{ij}] = \Theta + \textbf{diag}(A \textbf{1}) - A$ where $\textbf{1}$ is an all-one vector.
Further, $\Gc$ is called a strongly connected digraph if and only if $\sum_{k = 1}^{N-1} A^k$ has no zero entry.
The set of all in-neighbours of node $i$ is denoted as $\Nc_i = \{j \in \Vc~|~ (i,j) \in \Ec \}$.

\section{Problem Formulation}
\label{sec:problem_formulation}
In this section, we first describe a networked control system, with a global performance metric, under stealthy attacks. 
Secondly, the adversary and the defender are modeled with opposing objectives.
Finally, we formulate the two main research problems that will be studied in this paper. 
\subsection{Networked control systems under attacks}
In this subsection, we introduce a networked control system in a normal operation without attacks and describe the resources of the adversary and the defender. Subsequently, the system modeling under attacks is formulated.  
\subsubsection{Networked control systems without attacks}
Consider a strongly connected digraph $\Gc \triangleq (\Vc, \, \Ec, \, A, \, \Theta)$ with $N$ nodes, the state-space model of a one-dimensional node $i$ is described:
\begin{align}
	\dot x_i(t) &= A_{ii} x_i(t) + u_i(t), 
 ~ i \in \Vc = \bigl\{1,\,2,\ldots,\,N\bigr\},
	\label{sys:xi}
\end{align}
where $x_i(t) \in \Rbb$ and $u_i(t) \in \Rbb$ are the state and the local control input, respectively. The local parameter $A_{ii}$ is given to the operator. 
Each node $i \in \Vc$ is controlled by the following local control law:
\begin{align}
	u_i(t) = -(\theta_i + A_{ii}) x_i(t) +  \sum_{j \, \in \, \Nc_i} A_{ij} \big(x_j(t) - x_i(t)\big),
	\label{sys:ui}
\end{align}
where $\theta_i \in \Rbb_{>0}$ 
is the gain of a self-loop at node $i$. 
For convenience, let us denote $x(t)$ as the state of the entire network, $x(t) \triangleq \big[x_1(t),~x_2(t),\ldots,~x_N(t)\big]^\top$. Therefore, the closed-loop network control system \eqref{sys:xi}-\eqref{sys:ui} without attacks can be rewritten as follows:
\begin{align}
    \dot x(t) = -L x(t), \label{sys:x_healthy}
\end{align}
where $L$ is the in-degree Laplacian matrix representing the graph $\Gc$.


Similar to robust control, the global performance of the entire network for a given, possibly infinite, time horizon $[0,H]$ is formulated as follows:
\begin{align}
    J \triangleq \norm{p}_{\Lc_2[0,H]}^2.
    \label{sys:J}
\end{align}
Here, the performance output vector $p(t) \in \Rbb^N$ is defined as follows:
\begin{align}
    p(t) = W x(t), \label{sys:pi}
\end{align}
where $W = \textbf{diag}([w_i])$ and $w_i \in \Rbb_{> 0}$ is a given weighting factor. The weighting factor is determined by the operator. It is worth noting that this weighting factor will be discussed more in Section~\ref{sec:impatient_adv}.


\subsubsection{Defense resources}
To get prepared for facing malicious activities, the defender selects a subset of the node set $\Vc$ as a set of monitor nodes, denoted as $\Mc = \{m_1,m_2,\ldots,m_{| \Mc |} \} \subset \Vc$. A sensor is placed on each monitor node to monitor its output, where the number of utilized sensors should be constrained for practical reasons.
Let us denote $\beta \in \Rbb_{>0}$ as the sensor budget that is the maximum number of utilized sensors, i.e., 
\begin{align}
    |\Mc| \leq \beta. \label{sensor_budget}
\end{align} 
More specifically, the defender monitors the following output measurements:
\begin{align}
    y_m(t) = e^\top_m x(t),~~ \forall \, m \in \Mc. \label{sys:ym}
\end{align}
At each monitor node $m \in \Mc$, a corresponding alarm threshold $\delta_m \in \Rbb_{>0}$ is assigned. The defender notifies the presence of the adversary if the output energy for a given time horizon $[0,H]$ of at least one monitor node crosses its corresponding alarm threshold, i.e.,
\begin{align}
    \norm{y_m}^2_{\Lc_2[0,H]} > \delta_m. \label{alarm_deltam}
\end{align}
For later uses, let us denote the vector of all the alarm thresholds as $\delta \triangleq [\delta_1,\,\delta_2,\ldots,\,\delta_N]^\top$.
Further, each node $i$ has a cost $\kappa_i \in \Rbb_{>0}$, resulting in the following sensor cost of the monitor set $\Mc$:
\begin{align}
    c_s(\Mc) = \sum_{m \, \in \, \Mc} \kappa^\top e_m, \label{cost_sensor}
\end{align}
where $\kappa = \left[ \kappa_1, \kappa_2, \ldots, \kappa_N \right]$ is the given vector of all sensor costs.
\subsubsection{Adversary resources}
In critical conditions, the adversary gains access to the network and acquires enough information about the system parameters and defense strategies. Then, the adversary conducts the following attack strategies.

The adversary selects exactly $\alpha~(\alpha \leq N)$ attack nodes on which to conduct stealthy data injection attacks on their inputs. 
Let us denote the set of attack nodes as $\Ac \, (|\Ac| = \alpha)$.
For each attack node $a \in \Ac$, the adversary constructs an additive attack signal $\zeta_a(t)$, which is assumed to have bounded energy: 
\begin{align}
   \norm{\zeta_a}^2_{\Lc_2[0,\infty]}  \leq E < \infty, ~\forall \, a \in \Ac, \label{zeta_bounded}
\end{align}
where the maximum attack energy $E$ is given. 
Then, the control input in \eqref{sys:ui} under false data injection attacks can be represented as follows:
\begin{align}
    \tilde u_a(t) = u_a(t) + \begin{cases}
        \zeta_a(t),~~ \text{if}~~ a \in  \Ac, \\
        0, ~~ \text{otherwise}. 
    \end{cases} \label{sys:uia}
\end{align}

Given a set of attack nodes $\Ac \triangleq \{a_1, a_2, \ldots, a_{\alpha} \}$, let us denote the corresponding attack input matrix as $B_\Ac \triangleq [e_{a_1},e_{a_2},\ldots,e_{a_{\alpha}}]$ and the attack signal vector as $\zeta(t) \triangleq [\zeta_{a_1}(t),\zeta_{a_2}(t),\ldots,\zeta_{a_{\alpha}}(t)]^\top$.

\subsubsection{Networked control systems under attacks}
Given the above descriptions of the healthy network \eqref{sys:x_healthy}, the healthy global performance \eqref{sys:J}-\eqref{sys:pi}, the monitor outputs \eqref{sys:ym}, the adversary strategy \eqref{sys:uia}, and a set of attack nodes $\Ac$ which corresponds to an attack input matrix $B_\Ac$, the network control system under the stealthy attack can be described as follows:
\begin{align}
    \dot x^a(t) &= - L x^a(t)  + B_\Ac  \zeta(t), \label{sys:xa} \\
    p^a(t) &= W x^a(t), \label{sys:pa} \\
    y^a_m(t) &= e_m^\top x^a(t), ~ \forall m \in \Mc, \label{sys:yma}
\end{align}
where $W \triangleq \textbf{diag}([w_i])$ and the superscript ``a'' stands for signals subjected to attacks. 
As a consequence, the impact of attacks on the global performance is formulated as follows:
\begin{align}
    J^a \triangleq \norm{p^a}_{\Lc_2[0,H]}^2.
    \label{sys:Ja}
\end{align}
It is worth noting that the global performance under attacks \eqref{sys:Ja} is computed based on the signals under attacks \eqref{sys:pa} and is different from the global performance \eqref{sys:J} without attacks.

Given that the digraph $\Gc$, representing the network, is strongly connected and self-loop control gains in \eqref{sys:ui} exist, the in-degree Laplacian matrix $L$ in \eqref{sys:xa} is Hurwitz. As a result, the system can be assumed to converge to its equilibrium before being exposed to attacks.
Let us make use of the following assumption.
\begin{Assumption}
    \label{assumption:x0}
    The system \eqref{sys:xa} is at its equilibrium $x_e = 0$ before being affected by the attack signal $\zeta(t)$.  \QET
\end{Assumption}

It is worth noting that the bounded energy assumption \eqref{zeta_bounded} implies that the attack signals can last for a long but finite time, say $[0,T_a],~T_a < \infty$. Although the time horizon of attack signals is unknown \textit{a priori}, the bounded energy assumption \eqref{zeta_bounded} enables us to consider that $\zeta_a(\infty) \triangleq \lim_{t \ra \infty} \zeta_a(t) = 0 ~ \forall \, a \in \Ac$. This analysis results in the network state at infinity $x^a(\infty) \triangleq \lim_{t \ra \infty} x^a(t)$ in the following lemma.
\begin{Lemma}[Limited end-point attacks]
    \label{lem:finite:horizon}
    Consider the system \eqref{sys:xa},
    it holds $x^a(\infty) = 0$ if, and only if, $\zeta_a(\infty) = 0,~\forall a \in \Ac$. \QET
\end{Lemma}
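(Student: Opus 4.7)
The plan is to exploit the integral representation of the linear closed-loop system \eqref{sys:xa} together with the Hurwitz property of $L$ and the full column-rank structure of $B_\Ac$. Because Assumption~\ref{assumption:x0} fixes $x^a(0) = 0$, the solution admits the convolution form
\[
x^a(t) = \int_0^t e^{-L(t-\tau)} B_\Ac \, \zeta(\tau) \, d\tau,
\]
and since $L$ is Hurwitz there exist $M,\gamma > 0$ such that $\norm{e^{-L t}} \leq M e^{-\gamma t}$. These two facts are the workhorse of both implications.

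For the ($\Leftarrow$) direction, suppose $\zeta_a(\infty) = 0$ for every $a \in \Ac$, so that $B_\Ac \zeta(\tau) \to 0$. Given $\varepsilon > 0$, I would pick $T$ large enough that $\norm{B_\Ac \zeta(\tau)} < \varepsilon$ for all $\tau \geq T$, then split the convolution into $\int_0^T$ plus $\int_T^t$. The first piece behaves like $e^{-L(t-T)}$ applied to a fixed vector and therefore vanishes as $t \to \infty$, while the second is bounded by $\varepsilon M/\gamma$ via the exponential envelope. Sending first $t \to \infty$ and then $\varepsilon \to 0$ yields $x^a(\infty) = 0$.

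For the ($\Rightarrow$) direction, suppose $x^a(\infty) = 0$. Using the remark that precedes the lemma, the limit $\zeta_\infty \triangleq \lim_{t \to \infty} \zeta(t)$ exists component-wise. Taking the limit as $t \to \infty$ in the ODE $\dot x^a(t) = -L x^a(t) + B_\Ac \zeta(t)$ would then give $\lim_{t \to \infty} \dot x^a(t) = B_\Ac \zeta_\infty$. Since $x^a$ converges, its derivative limit, if it exists, must be $0$ (otherwise $x^a$ would grow linearly in $t$), so $B_\Ac \zeta_\infty = 0$. Because $B_\Ac = [e_{a_1}, \ldots, e_{a_\alpha}]$ is a column-selector matrix whose columns are distinct standard basis vectors, it has full column rank, forcing $\zeta_\infty = 0$, i.e. $\zeta_a(\infty) = 0$ for every $a \in \Ac$.

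The main obstacle I anticipate is the rigorous justification in the ($\Rightarrow$) direction that $\dot x^a(t)$ actually possesses a limit and that this limit must be zero. A pointwise derivative argument from convergence of $x^a$ alone is not automatic; a cleaner route, which I would use if the above sketch is challenged, is to integrate the dynamics over the sliding window $[t, t+1]$, observe $x^a(t+1) - x^a(t) \to 0$, and apply the dominated convergence theorem to the right-hand side, producing $-L \cdot 0 + B_\Ac \zeta_\infty = 0$ directly without invoking $\dot x^a(\infty)$.
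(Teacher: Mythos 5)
Your proof is correct and is, in two respects, more careful than the one the paper gives, while sharing its core skeleton. For the ($\Leftarrow$) direction the paper does not argue from the hypothesis $\zeta_a(\infty)=0$ directly: it invokes the discussion preceding the lemma to assume the attack vanishes identically after some finite time $T_a$, and then concludes from asymptotic stability of the unforced system $\dot x^a = -Lx^a$. Your variation-of-constants argument with the $\varepsilon$-split of the convolution integral needs only $\zeta(\tau)\to 0$, so it proves the implication under the hypothesis actually stated in the lemma rather than under the stronger compact-support reduction; this is a genuine (if modest) gain in generality. For the ($\Rightarrow$) direction you and the paper follow the same route --- pass to the limit in \eqref{sys:xa}, conclude $B_\Ac\zeta(\infty)=0$, and use that $B_\Ac$ is a full-column-rank selector matrix --- but the paper simply asserts that $x^a(\infty)=0$ implies $\lim_{t\to\infty}\dot x^a(t)=0$, which, as you correctly note, does not follow from convergence of $x^a$ alone. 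Your sliding-window fix (integrate the dynamics over $[t,t+1]$, let $x^a(t+1)-x^a(t)\to 0$ and $\int_t^{t+1}Lx^a(\tau)\,d\tau\to 0$, and deduce $\int_t^{t+1}B_\Ac\zeta(\tau)\,d\tau\to B_\Ac\zeta_\infty=0$) closes exactly the gap present in the published argument. The one caveat, which you inherit from the paper rather than introduce, is that both proofs of the ($\Rightarrow$) direction presuppose that $\lim_{t\to\infty}\zeta(t)$ exists; the bounded-energy condition \eqref{zeta_bounded} alone does not guarantee this, so it is worth stating that existence of the limit is taken as part of the standing assumptions, as the paper implicitly does in the paragraph preceding the lemma.
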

\begin{proof}
    (If) based on the discussion right before Lemma~\ref{lem:finite:horizon}, one considers that $\zeta(t) = 0, \forall t \in (T_a,\,\infty)$. Since $L$ in \eqref{sys:xa} is Hurwitz, the system \eqref{sys:xa} is asymptotically stable in the absence of an input signal, resulting in $x^a(\infty) = 0$. (Only if) the condition $x^a(\infty) = 0$ implies that $\lim_{t \ra \infty} x^a(t) = 0$ and $\lim_{t \ra \infty} \dot x^a(t) = 0$, leading to $\lim_{t \ra \infty} B_\Ac \zeta(t) = 0$ based on the dynamics \eqref{sys:xa}. Since matrix $B_\Ac$ is full column rank by construction, one has $\zeta_a(\infty) = 0$.
\end{proof}

Assumption 1 and Lemma~\ref{lem:finite:horizon} allow us to analyze the impact of the stealthy attacks subjected to the two constraints $x^a(0) = 0$ and $x^a(\infty) = 0$ throughout the remainder of the paper.
In the following, we conduct a deeper investigation into the impact of stealthy attacks, resulting in a novel security metric.
\subsection{Adversary model}
The purpose of the adversary is to maximally disrupt the global performance of the network \eqref{sys:Ja} subject to the network model \eqref{sys:xa}-\eqref{sys:yma} while remaining stealthy to the defender {(see the discussion on the importance of the stealthiness in \cite[Sec. II.E]{umsonst2021bayesian}).
The above adversarial purpose allows us to mainly focus on stealthy attacks which will be defined in the following. 
\begin{Definition}
    [Stealthy injection attacks] \label{def:stealthy_attacks}
    Consider the structure of the continuous LTI system
    \eqref{sys:xa}-\eqref{sys:yma} with monitor outputs $y_m^a(t) = e_m^\top x^a(t)$ for every $m \in \Mc$, which is a set of monitor nodes. The attack $\zeta(t)$ on the system \eqref{sys:xa}-\eqref{sys:yma} is defined as a stealthy attack if the following condition $\norm{y_m^a}^2_{\Lc_2} \leq \delta_m$ holds for all $m \in \Mc$. \QET
\end{Definition}

The stealthy attack policy in Definition~\ref{def:stealthy_attacks} enables the adversary to evaluate the maximum disruption on the network given a set of attack nodes $\Ac$ and a set of monitor nodes $\Mc$. Inspired by game theory \cite{bacsar1998dynamic}, the set of attack nodes $\Ac$ can be varied to gain the maximum disruption on the network. As a result, given a set of monitor nodes $\Mc$ and the number of attack nodes $\alpha$, the adversary solves the following optimization problem to find the optimal attack set $\Ac$ that yields the highest maximum disruption on the network:
\begin{align}
    Q(\Mc \,|\,\alpha) \triangleq \max_{\Ac \, \subset \, \Vc, \, |\Ac| =  \alpha} &~ V(\Mc, \Ac),  \label{Q_max} 
\end{align}
where $V(\Mc, \Ac )$ is the worst-case disruption on the global performance of the network \eqref{sys:Ja} and is formulated as follows:
\begin{align}
    V(\Mc, \Ac) \triangleq 
    ~\sup_{\zeta}&~
    \norm{p^a}_{\Lc_2}^2 
    \label{Q_sup} \\
    \text{s.t.}&~
    \eqref{sys:xa}-\eqref{sys:yma}, \, x^a(0) = 0, \, x^a(\infty) = 0, \non \\
    &\norm{y_m^a}^2_{\Lc_2} \leq \delta_m,~ \forall \, m \in \Mc, \non \\
    &\norm{e_j^\top \zeta}^2_{\Lc_2} \leq E,~ \forall \, j  \in \{1,2,\ldots, |\Ac| \}. \non    
\end{align}

The worst-case disruption \eqref{Q_sup} is an extended version of the Output-to-Output gain security metric proposed in \cite{teixeira2015strategic}. This security metric \eqref{Q_sup} is 
called an Attack-Energy-Constrained Output-to-Output gain security metric for a given set of attack nodes $\Ac$ and a given set of monitor nodes $\Mc$, which plays a crucial role in evaluating the impact of stealthy attacks and finding the optimal monitor set for the defender in the remainder of this paper.
Next, we model the defender who does not have full knowledge of the adversary.
\begin{Remark}
    In \eqref{Q_sup}, the first constraint is the system modeling of the network while the second and the third constraints are based on the assumption of the attack resources (Assumption~\ref{assumption:x0} and Lemma~\ref{lem:finite:horizon}). The last two constraints are the stealthiness condition \eqref{alarm_deltam} and the bounded energy \eqref{zeta_bounded} of the attack policy, respectively.
        \QET
\end{Remark}
\subsection{Defender model}
\label{sec:defender_model}
In practice, the defender seldom foresees the number of attack nodes, which is refereed to as attack type,
they need to deal with. In line with \cite{umsonst2021bayesian}, the defender can know several possible attack types with some probabilities. Let us assume that the defender considers a set of $n_a~ (n_a \leq N)$ attack types, which is defined as follows:
\begin{align}
    \Tc \triangleq \{ \alpha_1, \, \alpha_2,\ldots,\, \alpha_{n_a} \}, \label{attack_type}
\end{align} 
where an attack type $\alpha_k~(0 < \alpha_k \leq N)$ has exactly $\alpha_k$ attack nodes. Here, the attack types in $\Tc$ are modeled as random variables, where the probability of each type $\alpha_k$ is known \textit{a priori} and denoted as $\phi_k~ (0 \leq \phi_k \leq 1)$ and $\sum_{k = 1}^{n_a}~\phi_k = 1$. 
The information about these probabilities can be provided by conducting a risk assessment \cite{9011}, which is in line with the Bayesian and the stochastic game settings \cite{umsonst2021bayesian,sayin2021bayesian,etesami2018stochastic}. 
For each attack type $\alpha_k$, let us denote $\Sc_{\alpha_k}$ as a collection of all the admissible attack sets, i.e., 
\begin{align}
    \Sc_{\alpha_k} \triangleq \{ \Ac \subseteq \Vc \, | \, |\Ac| = \alpha_k \}. \label{col_attack_set}
\end{align}

In risk management, the fewer attack resources adversaries need to mount an attack, the more likely such attacks are to occur, leading to the following reasonable assumption.
\begin{Assumption}
    \label{assumption:adversary_type}
    For any pair of attack types $\alpha_k$ and $\alpha_l$, if $\alpha_k > \alpha_l$, then we assume that $\phi_k < \phi_l$.    \QET
\end{Assumption}


To deal with all the attack types of $\Tc$ defined in \eqref{attack_type}, the defender evaluates all the possible disruptions on the network for every attack type. This evaluation enables the defender to optimally allocate sensors such that the expected disruption is minimized. As a result, given a set of attack types $\Tc$, the defender
solves the following optimization problem:
\begin{align}
    R(\Tc \,|\, \beta) &\triangleq 
      \min_{\Mc \, \subset \, \Vc, \, |\Mc| \leq \beta} ~ \sum_{\alpha_k \, \in \, \Tc} ~
     \phi_k  ~ U(\Mc\,|\,\alpha_k), \label{R_min}
\end{align}
where $U(\Mc\,|\,\alpha_k)$ is the defense worst-case cost corresponding to $\alpha_k$-type adversary who solves \eqref{Q_max}. This cost is defined as follows:
\begin{align}
    &U(\Mc\,|\,\alpha_k) \triangleq c_s(\Mc) 
    + Q(\Mc\,|\,\alpha_k), \label{Rk}
\end{align}
where the cost of utilized sensors $c_s(\Mc)$ is given in \eqref{cost_sensor} and $Q(\Mc\,|\,\alpha_k)$ is given in \eqref{Q_max}. In the following, we formulate the research problems that will be addressed in this paper.

\begin{Remark}
    [Game theory perspective]
    The adversary and defender models described in this section are adapted from game theory with incomplete information, which Harsanyi introduced in 1967 \cite{harsanyi1967games}. In this paper, the adversary finds the best response by solving \eqref{Q_max} while the defender, who is uncertain of the adversary's objective, minimizes the expected value of all the possible adversary's payoffs \eqref{R_min}. This information assumption is also considered in Bayesian games, applied in security problems \cite{umsonst2021bayesian,gupta2016dynamic}. It is worth noting that this game may not yield a pure equilibrium if the two players make decisions simultaneously. To deal with such an issue, the concept of Stackelberg games \cite{bacsar1998dynamic} is adopted in this paper, in line with \cite{li2018false,yuan2019stackelberg,shukla2022robust}, where the adversary is allowed to move after fully observing the defender's decision.    \QET
\end{Remark}

\subsection{Scalable and optimal security allocation problem}
In the remainder of this paper, we address the following two research problems addressing \eqref{R_min}:
\begin{Problem}
[Optimal Security Allocation]
    \label{prob:general_sol} Given the network control system under stealthy attacks \eqref{sys:xa} and the adversary and defender models, find an optimal solution to the defense cost \eqref{R_min}.
\end{Problem}
\begin{Problem}
[Scalable Algorithm]
    \label{prob:impatient_sol} Given that the networked control system \eqref{sys:xa} can be of very large dimension ($N \ggg 1$), develop a \emph{computationally scalable algorithm} to find an optimal solution to the defense cost \eqref{R_min}.
\end{Problem}

\section{Optimal Security Allocation}
\label{sec:general_solution}
In this section, we first analyze the worst-case disruption on the global performance \eqref{Q_sup} and then present its computation. As a result, the computation of \eqref{Q_sup} enables us to find the optimal attack policy that is the solution to \eqref{Q_max}.
In the remainder of the section, we provide an SDP problem that solves \textit{Problem~\ref{prob:general_sol}} to obtain the optimal set of monitor nodes for the defender. 
\subsection{Worst-case disruption computation}

The following lemmas state the existence and the computation of a solution to \eqref{Q_max} for every possible attack type $\alpha_k$.


\begin{Lemma}
    [Bounded disruption]
    \label{lem:Q_bounded}
    Let us consider the networked control system \eqref{sys:xa}-\eqref{sys:yma} with a monitor set $\Mc$ under a stealthy attack $\zeta(t)$ conducted by the $\alpha_k$-type adversary.
    Then, the worst-case impact \eqref{Q_max} always admits a finite solution.
    \QET
\end{Lemma}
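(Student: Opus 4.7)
The plan is to reduce the outer maximization to an inner supremum on a fixed attack set. Since $Q(\Mc \,|\, \alpha_k)$ is the maximum of $V(\Mc, \Ac)$ over the finite collection $\Sc_{\alpha_k}$ of attack sets of cardinality $\alpha_k$, finiteness of $Q(\Mc \,|\, \alpha_k)$ will follow immediately once I establish that $V(\Mc, \Ac) < \infty$ for every fixed $\Ac \in \Sc_{\alpha_k}$.

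To bound $V(\Mc, \Ac)$ for a fixed $\Ac$, I would first relax the supremum in \eqref{Q_sup} by dropping the stealthiness constraints $\norm{y_m^a}^2_{\Lc_2} \leq \delta_m$ and the endpoint constraint $x^a(\infty) = 0$. This relaxation can only enlarge the feasible set and hence only increase the supremum, so any bound on the relaxed problem also bounds $V(\Mc, \Ac)$. On the relaxed problem the only remaining restrictions on $\zeta$ are the per-node energy bounds $\norm{e_j^\top \zeta}^2_{\Lc_2[0,\infty]} \leq E$ for $j \in \{1,\ldots,\alpha_k\}$, together with the dynamics \eqref{sys:xa}--\eqref{sys:pa} and $x^a(0) = 0$. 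Summing the component bounds yields the aggregate $\Lc_2$ bound $\norm{\zeta}^2_{\Lc_2[0,\infty]} \leq \alpha_k E$.

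Next, I would invoke Hurwitz stability of the in-degree Laplacian $L$, which is already established in the excerpt from the strong connectivity of $\Gc$ and the positivity of the self-loop gains $\theta_i$. Consequently, the transfer function $G_\Ac(s) \triangleq W (sI+L)^{-1} B_\Ac$ from $\zeta$ to $p^a$ is stable, with finite $\Hc_\infty$-norm $\norm{G_\Ac}_\infty$. Applying the standard $\Lc_2$-induced gain bound (via Parseval's theorem) with zero initial condition yields
\begin{align*}
\norm{p^a}^2_{\Lc_2[0,\infty]} \;\leq\; \norm{G_\Ac}^2_\infty \, \norm{\zeta}^2_{\Lc_2[0,\infty]} \;\leq\; \norm{G_\Ac}^2_\infty \, \alpha_k\, E.
\end{align*}
Hence $V(\Mc, \Ac) \leq \norm{G_\Ac}^2_\infty \alpha_k E < \infty$ for every fixed $\Ac$, and taking the maximum over the finite collection $\Sc_{\alpha_k}$ gives $Q(\Mc \,|\, \alpha_k) \leq \max_{\Ac \in \Sc_{\alpha_k}} \norm{G_\Ac}^2_\infty \, \alpha_k E < \infty$.

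The main subtlety, and the point where care is needed, is the compatibility of the endpoint condition $x^a(\infty) = 0$ and the choice of function space $\Lc_{2e}$. By Lemma~\ref{lem:finite:horizon}, $x^a(\infty)=0$ is equivalent to $\zeta_a(\infty) = 0$ for every $a \in \Ac$, which is consistent with the bounded-energy assumption \eqref{zeta_bounded}; moreover the bounded-energy constraints in fact force $\zeta \in \Lc_2[0,\infty]$, so the frequency-domain argument above applies without modification. Beyond this observation, the proof is a direct consequence of Hurwitz stability of $L$ and the finiteness of $\Sc_{\alpha_k}$, so I do not anticipate a deeper technical obstacle.
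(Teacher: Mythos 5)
Your proposal is correct. Note that the paper does not spell out a proof of Lemma~\ref{lem:Q_bounded} at all: it simply defers to Proposition~2.1 of the authors' earlier work \cite{anand2023risk}. Your argument is therefore a self-contained replacement rather than a paraphrase, and it isolates exactly the right mechanism: the element-wise energy caps $\norm{e_j^\top \zeta}^2_{\Lc_2} \leq E$ give the aggregate bound $\norm{\zeta}^2_{\Lc_2} \leq \alpha_k E$, dropping the stealthiness and endpoint constraints only enlarges the feasible set, and internal stability of $\dot x^a = -Lx^a + B_\Ac \zeta$ (the paper's strong connectivity plus $\theta_i > 0$) yields a finite induced $\Lc_2$ gain, hence $V(\Mc,\Ac) \leq \norm{G_\Ac}^2_\infty \, \alpha_k E$; finiteness of $\Sc_{\alpha_k}$ finishes the job. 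This is precisely why the \emph{attack-energy-constrained} output-to-output gain is always well-posed, whereas the unconstrained output-to-output gain of \cite{teixeira2015strategic} can be infinite. Two cosmetic points: you could note that the feasible set is nonempty ($\zeta \equiv 0$ is admissible), so the supremum is a finite nonnegative number rather than $-\infty$; and the paper's phrase ``$L$ is Hurwitz'' should be read as ``$-L$ is Hurwitz,'' which is what your transfer-function $W(sI+L)^{-1}B_\Ac$ implicitly uses. Neither affects the validity of your argument.
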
    
The proof of Lemma~\ref{lem:Q_bounded} follows from our previous result in \cite[Proposition 2.1]{anand2023risk}.
Next, we provide a method for solving \eqref{Q_max}. The following computation allows the malicious adversary to choose the optimal attack policy for a given monitor set decided by the defender.  
\begin{Lemma}
    [Worst-case disruption computation]
    \label{lem:Q_computation}
    Suppose that the monitor set $\Mc$ is given and the adversary is $\alpha_k$-type. For each admissible attack set $\Ac \in \Sc_{\alpha_k}$, 
    a tuple of variables $( \{\gamma_{\Ac}^m\}_{\forall m \, \in \, \Mc },\,\psi_\Ac,P_\Ac) \, \in \, \Rbb_{>0} \times \Rbb_{>0}^{\alpha_k} \times \Sbb^N_{\succeq 0}$ is defined correspondingly. The optimal attack policy of the $\alpha_k$-type adversary, which is formulated in \eqref{Q_max}, is computed by the following SDP problem:
    \begin{align}
    & Q(\Mc\,|\,\alpha_k) = \min_{Q_k,\, \{
    \{\gamma_{\Ac}^m\}_{\forall m \, \in \, \Mc},\,\psi_\Ac,\,P_\Ac \}_{\forall  \Ac \, \in \, \Sc_{\alpha_k}} }~~ Q_k \label{Q_sdp}  \\
    \text{s.t.}
    &~~~~~
    Q_k \in \Rbb_{>0}, ~
    \gamma_{\Ac}^m \in \Rbb_{>0},~ \psi_\Ac \in \Rbb_{>0}^{\alpha_k}, ~
    P_\Ac \in \Sbb^{N}_{\succeq 0}, \non \\
    &~~~~~
    \sum_{m \, \in \, \Mc}  \gamma_{\Ac}^m \, \delta_m  
    + E \, \textbf{1}^\top  \psi_\Ac  \leq Q_k, \non \\
    &~~~~~
    \ba{cc}
    -L^\top P_\Ac - P_\Ac L + W^2 & P_\Ac B_\Ac \\
    B_\Ac^\top P_\Ac & -\, \textbf{diag}(\psi_\Ac)
     \ea 
    \non \\
    &~~~~~
    - \sum_{m \, \in \, \Mc}   \textbf{diag} \Bigg(\ba{c} 
    \gamma_{\Ac}^m \, e_m \\ 0 \ea \Bigg) \preceq 0,
    \forall m \in \Mc, \forall \Ac \in \Sc_{\alpha_k}
    . \non 
    \end{align}
    Here, $\textbf{1}$ stands for an all-one vector with a proper dimension. \QET
\end{Lemma}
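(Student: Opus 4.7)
The plan is to establish the SDP reformulation by combining three ingredients: Lagrangian duality applied to the inner worst-case problem $V(\Mc,\Ac)$ for a fixed attack set, a storage-function (dissipativity) argument exploiting the boundary conditions provided by Assumption~\ref{assumption:x0} and Lemma~\ref{lem:finite:horizon}, and an outer maximization over the admissible attack sets $\Ac \in \Sc_{\alpha_k}$.

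First I would fix an arbitrary $\Ac \in \Sc_{\alpha_k}$ and form the Lagrangian of \eqref{Q_sup} with nonnegative multipliers $\gamma_\Ac^m$ (one per stealthiness constraint) and $\psi_\Ac^j$ (one per energy constraint):
\begin{align*}
\mathcal{L} = \norm{p^a}^2_{\Lc_2} + \sum_{m \in \Mc}\gamma_\Ac^m \bigl(\delta_m - \norm{y_m^a}^2_{\Lc_2}\bigr) + \sum_{j} \psi_\Ac^j \bigl(E - \norm{e_j^\top \zeta}^2_{\Lc_2}\bigr),
\end{align*}
so that weak duality yields $V(\Mc,\Ac) \leq \sup_{\zeta}\mathcal{L}$ for any feasible dual variables. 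Next, I would introduce a quadratic storage function $S(x) = x^\top P_\Ac x$ with $P_\Ac \in \Sbb^N_{\succeq 0}$. Because Assumption~\ref{assumption:x0} and Lemma~\ref{lem:finite:horizon} guarantee $x^a(0) = x^a(\infty) = 0$, the identity $\int_0^\infty \frac{d}{dt}\bigl[x^{a\top}P_\Ac x^a\bigr]\,dt = 0$ can be added to $\mathcal{L}$ for free. Substituting the dynamics \eqref{sys:xa} and the output equations into the resulting integrand, $\mathcal{L} - \sum_m \gamma_\Ac^m \delta_m - E\mathbf{1}^\top \psi_\Ac$ reduces to the integral of a quadratic form in $(x^a,\zeta)$ whose coefficient matrix is precisely the left-hand side of the LMI in \eqref{Q_sdp}. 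If this matrix is negative semidefinite, the integrand is nonpositive pointwise, yielding $V(\Mc,\Ac) \leq \sum_m \gamma_\Ac^m \delta_m + E\mathbf{1}^\top \psi_\Ac$.

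Combining this bound with the scalar constraint $\sum_{m \in \Mc}\gamma_\Ac^m \delta_m + E\mathbf{1}^\top \psi_\Ac \leq Q_k$ and taking the outer maximization over $\Ac \in \Sc_{\alpha_k}$, any SDP-feasible tuple produces an upper bound $Q(\Mc\,|\,\alpha_k) \leq Q_k$, which establishes one inequality directly from weak duality. To conclude equality, I would argue that this upper bound is tight for every fixed $\Ac$: minimizing $\sum_m \gamma_\Ac^m \delta_m + E\mathbf{1}^\top \psi_\Ac$ over the LMI and multiplier constraints returns $V(\Mc,\Ac)$ itself. Finiteness of this value is ensured by Lemma~\ref{lem:Q_bounded}, which acts as a constraint qualification for strong duality; losslessness of the associated S-procedure with two-point boundary conditions is the classical tool and has been used in essentially the same form in \cite{teixeira2015strategic} and \cite{anand2023risk}.

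The main obstacle is precisely this strong-duality step. Because the problem involves multiple stealthiness constraints and multiple energy bounds simultaneously, the S-procedure becomes multi-constraint and is not lossless in general. I would therefore either (i) invoke the infinite-dimensional KYP/dissipativity characterization of output-to-output gains together with a convexity argument on the feasible set of $(P_\Ac, \gamma_\Ac, \psi_\Ac)$, or (ii) directly adapt the lossless S-procedure established in \cite[Prop. 2.1]{anand2023risk} to the present multi-output, multi-input attack setting. Once this tightness is settled for each $\Ac$, the minimum of $Q_k$ subject to the per-$\Ac$ constraints equals $\max_{\Ac \in \Sc_{\alpha_k}} V(\Mc,\Ac) = Q(\Mc\,|\,\alpha_k)$, completing the proof.
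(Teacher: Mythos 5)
Your proposal follows essentially the same route as the paper's own proof: an epigraph reformulation of the outer maximization over $\Ac \in \Sc_{\alpha_k}$, Lagrangian duality on \eqref{Q_sup} for each fixed $\Ac$, and a quadratic storage function $S(x^a) = (x^a)^\top P_\Ac x^a$ exploiting the boundary conditions $x^a(0) = x^a(\infty) = 0$ to turn the dissipation inequality into the LMI of \eqref{Q_sdp}. The strong-duality step you correctly identify as the main obstacle is handled in the paper exactly as in your option (ii), by appealing to the lossless S-procedure (the paper cites \cite[Ch.~4]{petersen2000robust} and builds on \cite{anand2023risk}), so your candid flagging of the multi-constraint losslessness issue points at the same place where the paper itself relies on a citation rather than a self-contained argument.
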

\begin{proof}
The optimization problem \eqref{Q_max} is equivalent to the following optimization problem:
    \begin{align}
        Q(\Mc\,|\,\alpha_k) = &\min_{Q_k \, > \, 0} ~~ Q_k \label{Qstar} \\
        \text{s.t.}&~~~
        V(\Mc, \Ac)  \leq Q_k, ~ \forall \, \Ac \in \Sc_{\alpha_k}. \non
    \end{align}
    On the other hand, for a given element $\Ac$ in $\Sc_{\alpha_k}$, we show how to compute the worst-case disruption \eqref{Q_sup}, which has the dual form:
    \begin{align}
        &\inf_{ 
        \{\gamma^m_\Ac\}_{\forall m \, \in \, \Mc} \,\in\, \Rbb_{>0} ,\, \psi_\Ac \,\in\, \Rbb^{\alpha_k}_{>0} } ~ \bigg[ ~ \sup_{ \zeta }~ \bigg\{
        \norm{p^a}_{\Lc_2}^2 
        \non \\
        &
        + \sum_{m \, \in \, \Mc} \gamma^m_\Ac \big( \delta_m - \norm{y_m^a}_{\Lc_2}^2 \big) 
        + \sum_{j\,=\,1}^{\alpha_k} 
    e_j^\top
    \psi_\Ac \big( E - \norm{e_j^\top \zeta}_{\Lc_2}^2 \big) \bigg\}\bigg] \label{Q_inf} \\
        & 
        \text{s.t.}~~ \eqref{sys:xa}-\eqref{sys:yma}, \, x^a(0) = 0,\, x^a(\infty) = 0, \non 
    \end{align}
    where $\gamma^m_\Ac$ and $\psi_\Ac$ are Lagrange multipliers associated with the first and second inequality constraints in \eqref{Q_sup}, respectively.
    The dual form \eqref{Q_inf} is bounded only if 
    \begin{align}
        &\norm{p^a}_{\Lc_2}^2 - 
        \sum_{m  \in \Mc} \gamma^m_\Ac  \norm{y_m^a}_{\Lc_2}^2 
        - \norm{\textbf{diag}(\psi_\Ac)^{\frac{1}{2}} \zeta}^2_{\Lc_2}   \leq  0, \non 
    \end{align}
    which results in the following optimization problem:
    \begin{align}
    V(\Mc,\Ac) = &
    \inf_{\{\gamma^m_\Ac\}_{\forall m \, \in \, \Mc} ,\, \psi_\Ac}  ~~ \sum_{m \, \in \, \Mc}  \gamma^m_\Ac \delta_m + E \, \,\textbf{1}^\top  \psi_\Ac  \label{Q_min} \\ 
    \text{s.t.}~~~& 
    \eqref{sys:xa}-\eqref{sys:yma}, \, x^a(0) = 0, \, x^a(\infty) = 0, \non \\
    &\gamma^m_\Ac \, \in \, \Rbb_{>0},\, \psi_\Ac \, \in \, \Rbb_{>0}^{\alpha_k}, \non \\
    &  \hspace{-1cm}
    \norm{p^a}_{\Lc_2}^2 - \sum_{m \, \in \, \Mc} \gamma^m_\Ac \norm{y_m^a}^2_{\Lc_2} - 
    \norm{\textbf{diag}(\psi_\Ac)^{\frac{1}{2}} \zeta}^2_{\Lc_2} \leq 0.
    \non 
    \end{align}
    The strong duality can be proven by utilizing the lossless S-Procedure \cite[Ch. 4]{petersen2000robust}. Recalling the key results in the dissipative system theory for linear systems \cite{trentelman1991dissipation} with a non-negative storage function $S(x^a) \triangleq (x^a)^\top P_\Ac x^a$, where $P_\Ac  \in \Sbb^N_{\succeq 0} $, and a supply rate $s(\cdot,\cdot) \triangleq   \sum_{m \, \in \, \Mc} \gamma^m_\Ac \norm{y_m^a}^2_{\Lc_2} + \norm{\textbf{diag}(\psi_\Ac)^{\frac{1}{2}} \zeta}^2_{\Lc_2} - \norm{p^a}_{\Lc_2}^2$, we observe that the inequality constraint in \eqref{Q_min} is equivalent to the system being dissipative with respect to the supply rate $s(\cdot,\cdot)$. Hence, the inequality constraint in \eqref{Q_min} can be replaced with the equivalent dissipation inequality and 
    the optimization problem \eqref{Q_min} is translated into the following SDP problem 
    \begin{align}
    \hspace{-0.1cm}
        V(\Mc,\Ac) = &
    \min_{\{\gamma^m_\Ac\}_{\forall m \, \in \, \Mc} ,\, \psi_\Ac, \,P_\Ac}~  \sum_{m \, \in \, \Mc}  \gamma^m_\Ac  \delta_m + E \,\textbf{1}^\top  \psi_\Ac  \label{Q_min_sdp} \\ 
    \text{s.t.}~~~& \gamma^m_\Ac \, \in \, \Rbb_{>0},\, \psi_\Ac \, \in \, \Rbb_{>0}^{\alpha_k}, \, P_\Ac \in \Sbb^N_{\succeq 0}, \non \\
    &
    \ba{cc}
    -L^\top P_\Ac - P_\Ac L + W^2 & P_\Ac B_\Ac \\
    B_\Ac^\top P_\Ac & -\, \textbf{diag}(\psi_\Ac)
    \ea \non \\
    &
    - \sum_{m \, \in \, \Mc}   \textbf{diag} \Bigg(\ba{c} 
    \gamma^m_\Ac \, e_m \\ 0 \ea \Bigg) \preceq 0. \non
    \end{align}
    Substituting \eqref{Q_min_sdp} into \eqref{Qstar} yields \eqref{Q_sdp}, which concludes the proof.
\end{proof}

Levering the results of Lemmas~\ref{lem:Q_bounded} and \ref{lem:Q_computation}, the following subsection proposes a solution to \textit{Problem~\ref{prob:general_sol}}.





\subsection{Solution to \textit{Problem~\ref{prob:general_sol}} (general security allocation)}
Before solving \textit{Problem~\ref{prob:general_sol}}, we present a property of $\gamma_\Ac^{m \star}$, which is a part of the solution to \eqref{Q_sdp}, in the following lemma.
\begin{Lemma}
    \label{lem:Q_inf}
    Consider the SDP problem \eqref{Q_sdp} in Lemma~\ref{lem:Q_computation}. For any $m \in \Mc$ and $\Ac \in \Sc_{\alpha_k}$, the optimal solution $\gamma_\Ac^{m \star}$ to  \eqref{Q_sdp} has the following upper bound:
    \begin{align}
        \gamma_\Ac^{m \star} \leq \left[ \min_{i \, \in \, \Vc} \delta_i \right]^{-1}
        \max_{\Ac \in \Sc_{\alpha_k}}  V_{\infty}(\Ac),
        \label{maxgamma}
    \end{align}
    where 
    \begin{align}
    V_{\infty}(\Ac) \triangleq V(\emptyset,\Ac). \label{Q_sup_im_Hinf}
    \end{align}
\end{Lemma}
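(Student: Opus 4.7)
The plan is to exploit the scalar coupling structure of the SDP \eqref{Q_sdp}: although the decision variables $(\{\gamma_\Ac^m\},\psi_\Ac,P_\Ac)$ are indexed by $\Ac$, they are tied together only through the common bound $Q_k$. First I would look at the scalar inequality
\[
\sum_{m\,\in\,\Mc}\gamma_\Ac^m\,\delta_m + E\,\mathbf{1}^\top \psi_\Ac \,\leq\, Q_k
\]
evaluated at an optimal tuple $(\{\gamma_\Ac^{m\star}\},\psi_\Ac^\star,P_\Ac^\star,Q_k^\star)$. Since every $\gamma_\Ac^{m'\star}\geq 0$, every entry of $\psi_\Ac^\star$ is non-negative, and $E,\delta_m>0$, I can drop all terms except the one with index $m$ to obtain the single-term inequality $\gamma_\Ac^{m\star}\delta_m\leq Q_k^\star = Q(\Mc\,|\,\alpha_k)$. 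Dividing by $\delta_m$ and bounding $\delta_m$ from below by $\min_{i\in\Vc}\delta_i$ gives
\[
\gamma_\Ac^{m\star} \,\leq\, \bigl[\min_{i\,\in\,\Vc}\delta_i\bigr]^{-1}\,Q(\Mc\,|\,\alpha_k).
\]

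The remaining step is to bound $Q(\Mc\,|\,\alpha_k)$ by $\max_{\Ac\in\Sc_{\alpha_k}}V_\infty(\Ac)$. I would argue via monotonicity of $V(\Mc,\Ac)$ in $\Mc$: going from $\emptyset$ to $\Mc$ only appends the stealthiness constraints $\norm{y_m^a}^2_{\Lc_2}\leq\delta_m$ for $m\in\Mc$ to the feasible set of the supremum in \eqref{Q_sup}, and shrinking the feasible set of a supremum cannot increase its value. Hence $V(\Mc,\Ac)\leq V(\emptyset,\Ac)=V_\infty(\Ac)$ for every $\Ac$. Taking the maximum over $\Ac\in\Sc_{\alpha_k}$ on both sides and recalling \eqref{Q_max} yields $Q(\Mc\,|\,\alpha_k)\leq \max_{\Ac\in\Sc_{\alpha_k}}V_\infty(\Ac)$, which plugged into the previous bound delivers \eqref{maxgamma}.

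I do not anticipate a genuine obstacle, since the argument is structural rather than computational; the only point requiring care is justifying that the scalar inequality in \eqref{Q_sdp} is indeed active or at least valid at any optimum. This is automatic because the constraint is explicitly imposed in the SDP for \emph{every} $\Ac\in\Sc_{\alpha_k}$, so any feasible (in particular, optimal) tuple satisfies it. The rest is elementary monotonicity of constrained suprema, which is the one place where one should explicitly invoke the original primal form \eqref{Q_sup} rather than the dual SDP \eqref{Q_min_sdp}, since the monotonicity in $\Mc$ is transparent there.
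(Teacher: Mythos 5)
Your proposal is correct and follows essentially the same route as the paper's proof: both extract $\gamma_\Ac^{m\star}\delta_m \leq \sum_{m'\in\Mc}\gamma_\Ac^{m'\star}\delta_{m'} \leq Q(\Mc\,|\,\alpha_k)$ from the scalar constraint in \eqref{Q_sdp}, and both bound $Q(\Mc\,|\,\alpha_k)$ by $\max_{\Ac\in\Sc_{\alpha_k}}V_\infty(\Ac)$ via the observation that dropping the stealthiness constraints in \eqref{Q_sup} can only increase the supremum. Your write-up merely spells out the term-dropping and the division by $\delta_m \geq \min_{i\in\Vc}\delta_i$ in slightly more detail than the paper does.
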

\begin{proof}
    It is worth noting that \eqref{Q_sup_im_Hinf} corresponds to \eqref{Q_sup} when no monitoring node is considered, i.e., the first inequality constraint in \eqref{Q_sup} is removed, resulting in $V_\infty(\Ac) \geq V(\Mc,\Ac) ~ \forall \Ac$. As a result, one has
    $\max_{\Ac \in \Sc_{\alpha_k}}  V_{\infty}(\Ac) \geq Q(\Mc\,|\,\alpha_k)$, which is defined in \eqref{Q_max} and computed by \eqref{Q_sdp}. 
    On the other hand, from \eqref{Q_sdp}, one obtains
    $\sum_{m \, \in \, \Mc} \gamma_\Ac^{m\star} \delta_m \leq Q(\Mc\,|\,\alpha_k)$,
    which leads to \eqref{maxgamma}.
\end{proof}
Prior to the solution to \textit{Problem~\ref{prob:general_sol}}, let us make use of the following notation. Denote $\hat V_\infty \triangleq [\min_{i \, \in \, \Vc} \delta_i ]^{-1}  \max_{\alpha_k \, \in \, \Tc}  \max_{\Ac \in \Sc_{\alpha_k}}  V_{\infty}(\Ac)$ where $V_{\infty}(\Ac)$ is defined in \eqref{Q_sup_im_Hinf}. For each attack type $\alpha_k \in \Tc$, define a positive scalar variable $Q_k \in \Rbb_{>0}$ and recall $\Sc_{\alpha_k}$ defined in \eqref{col_attack_set}. For each admissible attack action $\Ac \in \Sc_{\alpha_k}$, a tuple of variables $(\omega_\Ac,\,\psi_\Ac,P_\Ac) \, \in \, \Rbb^N_{\geq 0} \times \Rbb_{>0}^{\alpha_k} \times \Sbb^N_{\succeq 0}$ is defined correspondingly.
Recall the vector of all the alarm thresholds $\delta$ and the vector of all the sensor costs $\kappa$.
Denote an $N$-dimensional binary vector $z$ as a representation of the monitor set $\Mc$ where $m$-entry of $z$ being equal to $1$ indicates that $m$-th node belongs to $\Mc$.
Based on these definitions and the results presented in Lemmas~\ref{lem:Q_bounded}-\ref{lem:Q_inf}, we are now ready to present the following theorem that solves \textit{Problem~\ref{prob:general_sol}}.
\begin{Theorem}[Optimal security allocation]
    \label{th:general_sol}
    Consider the networked control system \eqref{sys:xa}-\eqref{sys:yma} 
    associated with 
    a monitor set $\Mc$ under a stealthy attack $\zeta(t)$, which is assumed to be conducted by one of 
    $n_a$ attack types in $\Tc$ defined in \eqref{attack_type}. 
    Recall the monitor set $\Mc$ denoted as a binary variable $z \in \{0,1\}^N$.
    Then, the optimal monitor set, which is the solution to \eqref{R_min}, is determined by $z^\star$, which is the optimal solution to the following mixed-integer SDP problem:
    \begin{align}
    &\min_{z, \, \big\{Q_k,\{\omega_{\Ac},\, \psi_{\Ac},\, P_{\Ac}\}_{\forall \Ac \,\in\, \Sc_{\alpha_k}} \big\}_{\forall \alpha_k \in \Tc} } ~~ \kappa^\top z + 
    \sum_{\alpha_k \, \in \, \Tc}  \phi_k  \, Q_k    \label{R_sdp} \\
    &\text{s.t.}~
    z \in \{0,1\}^N, \, Q_k \in \Rbb_{>0}, \, \omega_{\Ac} \in \Rbb_{\geq 0}^N, \, \psi_{\Ac} \in \Rbb^{|\Ac|}_{>0}, 
    P_{\Ac} \in \Sbb^{N}_{\succeq 0}, \non \\
    &~~~~~
    \textbf{1}^\top z \leq \beta, \, \omega_{\Ac} \, \leq \, \hat V_\infty z, \,
    \delta^\top \omega_{\Ac} + E~ \textbf{1}^\top  \psi_{\Ac}  \, \leq \, Q_k, 
    \non \\ 
    &~~~~~
    \ba{cc}
    -L^\top P_{\Ac} - P_{\Ac} L + W^2 & P_{\Ac} B_\Ac \\
    B_\Ac^\top P_{\Ac} & -\, \textbf{diag}(\psi_{\Ac})
    \ea \non \\
    &~~~~
    - \textbf{diag} \Bigg(\ba{c} 
    \omega_{\Ac} \\ 0 \ea \Bigg) \preceq 0, 
    ~\forall \, \Ac \in \Sc_{\alpha_k}, \, \forall \alpha_k \in \Tc, \non
    \end{align}
    where 
    $\textbf{1}$ stands for an all-one vector with a proper dimension. 
    \QET
\end{Theorem}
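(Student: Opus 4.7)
The plan is to express the nested minimization in \eqref{R_min} as a single joint minimization by substituting the SDP characterization of $Q(\Mc\,|\,\alpha_k)$ from Lemma~\ref{lem:Q_computation} and lifting the dependence on $\Mc$ into a binary selector vector $z$. First, I would rewrite $c_s(\Mc) = \kappa^\top z$ using \eqref{cost_sensor} with $z_m = 1 \Leftrightarrow m \in \Mc$, and $|\Mc| \leq \beta$ as $\textbf{1}^\top z \leq \beta$. For every attack type $\alpha_k \in \Tc$, Lemma~\ref{lem:Q_computation} represents $Q(\Mc\,|\,\alpha_k)$ as the minimum of $Q_k$ over the tuple $(\{\gamma_\Ac^m\}_{m \in \Mc}, \psi_\Ac, P_\Ac)_{\Ac \in \Sc_{\alpha_k}}$ subject to an LMI and the scalar bound $\sum_{m \in \Mc} \gamma_\Ac^m \delta_m + E\,\textbf{1}^\top \psi_\Ac \leq Q_k$. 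Since the SDPs indexed by distinct $\alpha_k$ share no variables and the outer objective \eqref{R_min} is separable through $\phi_k$, the nested minimizations can be merged into a single joint minimization over $z$ together with all SDP variables.

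The core reformulation step is to absorb the $\Mc$-dependent sum $\sum_{m \in \Mc} \gamma_\Ac^m e_m$ into a single $N$-dimensional variable $\omega_\Ac \in \Rbb^N_{\geq 0}$, so that \eqref{Q_sdp}'s LMI and scalar constraint take the compact forms appearing in \eqref{R_sdp} after identifying $(\omega_\Ac)_m$ with $\gamma_\Ac^m$ whenever $m \in \Mc$ and with $0$ otherwise. The mechanism that enforces this support condition is the constraint $\omega_\Ac \leq \hat V_\infty z$ together with $\omega_\Ac \geq 0$: whenever $z_m = 0$, both bounds collapse to $(\omega_\Ac)_m = 0$, so node $m$ contributes nothing to the LMI or to $\delta^\top \omega_\Ac$, reproducing precisely the structure of \eqref{Q_sdp} for monitor set $\Mc = \{m : z_m = 1\}$.

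It remains to argue that the upper bound $\omega_\Ac \leq \hat V_\infty z$ is inactive at optimum whenever $z_m = 1$, so that no feasible point of the original SDP \eqref{Q_sdp} is cut off. This is exactly where Lemma~\ref{lem:Q_inf} is needed: for any monitor set $\Mc$, the optimal $\gamma_\Ac^{m\star}$ is bounded by $[\min_i \delta_i]^{-1} \max_{\Ac \in \Sc_{\alpha_k}} V_\infty(\Ac) \leq \hat V_\infty$. Therefore, given any $z$ representing an admissible $\Mc$, an optimal $(\gamma_\Ac^m, \psi_\Ac, P_\Ac)$ solving \eqref{Q_sdp} can be mapped, via $(\omega_\Ac)_m = \gamma_\Ac^m \cdot z_m$, to a feasible point of \eqref{R_sdp} with identical objective, and vice versa. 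Combined with $\kappa^\top z = c_s(\Mc)$ and the identity $\sum_{\alpha_k \in \Tc} \phi_k Q_k = \sum_{\alpha_k \in \Tc} \phi_k Q(\Mc \mid \alpha_k)$ at optimality, this gives a bijection between optimal solutions of \eqref{R_min} and \eqref{R_sdp}, proving that $z^\star$ encodes the optimal monitor set.

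The main obstacle I anticipate is rigorously justifying the tightness of the relaxation: one must check that the relaxed continuous variable $\omega_\Ac$ recovers the same optimum as the original discrete problem where $\gamma_\Ac^m$ exists only for $m \in \Mc$, and this relies crucially on the Lemma~\ref{lem:Q_inf} upper bound being valid uniformly across all $\Ac \in \Sc_{\alpha_k}$ and all attack types $\alpha_k \in \Tc$, which motivates the definition of $\hat V_\infty$ with the outer maximum over both $\Ac$ and $\alpha_k$.
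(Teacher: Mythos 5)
Your proposal is correct and follows essentially the same route as the paper's proof: the change of variables $z = \sum_{m\in\Mc} e_m$ and $\omega_\Ac = \sum_{m\in\Mc}\gamma^m_\Ac e_m$ to remove the bilinearity $\gamma^m_\Ac e_m$, the big-M coupling $\omega_\Ac \leq \hat V_\infty z$ justified by Lemma~\ref{lem:Q_inf}, and the substitution of \eqref{Q_sdp} into \eqref{R_min}. Your explicit argument that the bound $\omega_\Ac \leq \hat V_\infty z$ is inactive when $z_m = 1$ (so no feasible point of \eqref{Q_sdp} is cut off) is in fact slightly more careful than the paper's one-line appeal to Lemma~\ref{lem:Q_inf}.
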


The proof of Theorem~\ref{th:general_sol} is given in Appendix~\ref{app:thproof_general_sol}.
We highlight several interesting aspects of the results presented in this subsection in the following remarks.

\begin{Remark}
    [Diagonal norm-bound]
    It is worth noting that \eqref{Q_sup_im_Hinf} differs from the classical $\Hc_\infty$ norm since it considers the element-wise bounded energy for input signals. This metric is also called the diagonal norm-bound in \cite[Ch. 6]{boyd1994linear}.     \QET 
\end{Remark}
\begin{Remark}
    [Practical implementation]
    The value of $V_\infty(\Ac)$ in \eqref{Q_sup_im_Hinf} can be computed by solving \eqref{Q_min_sdp} with no monitor node and a given $\Ac$. By definition,
    the value of $\hat V_\infty$ in Theorem~\ref{th:general_sol} can be computed by evaluating $V_\infty(\Ac)$ for all possible $\Ac$. Alternatively, the value of $\hat V_\infty$
    can be chosen as a very large positive number without explicitly computing \eqref{Q_sup_im_Hinf} for all possible $\Ac$ in practice. 
    This value is also referred to as a ``big M'' in \cite{milovsevic2023strategic}. \QET
\end{Remark}
\begin{Remark}
    [Computational complexity] To solve Problem~\ref{prob:general_sol}, the solution \eqref{R_sdp} proposed in Theorem~\ref{th:general_sol} requires us to consider all admissible attack scenarios, significantly increasing the number of optimal decision variables. As a consequence, the computational complexity increases as the size of the network increases. It is worth noting that the same issue is witnessed in recent studies \cite{li2018false,yuan2019stackelberg,shukla2022robust,umsonst2021bayesian}. In an attempt to deal with such a computational issue, we conduct a deeper analysis of the worst-case disruption \eqref{Q_sup} to find a control design procedure that assists us in solving the security allocation problem more efficiently in the following section.
    \QET
\end{Remark}

\section{Scalable Allocation Algorithm}
\label{sec:impatient_adv}


In this section, we conduct an analysis on the computation of the worst-case disruption defined in \eqref{Q_sup}. The analysis enables us to employ the result presented in Theorem~\ref{th:general_sol} in a more efficient way, albeit under a suitably designed control policy.

Recall the definition of the worst-case disruption \eqref{Q_sup}, we first introduce the following definition of its feasible set in the following.
\begin{Definition}[Feasible set]
Given a pair of $(\Mc, \Ac)$ and the networked control system \eqref{sys:xa}-\eqref{sys:yma},
the set of feasible values of the optimization problem \eqref{Q_sup} 
is defined as 
\begin{equation}\label{Q_Ceas_set}
    \Qc(\Mc, \Ac)\triangleq \left\{ q \;\Bigg|\;~
    \begin{aligned}
    & 
    \norm{e_j^\top \zeta}^2_{\Lc_2} \leq E, \, \forall \, j  \in \{1,2,\ldots, |\Ac| \}\\
    & \norm{y_m^a}^2_{\Lc_2} \leq \delta_m, \norm{p^a}_{\Lc_2}^2 \leq q\\
    &x^a(0) = 0, x^a(\infty) = 0
    \end{aligned}
    \right\}. 
\end{equation} \QET 
\end{Definition}

With the above definition of feasible sets, we organize the remainder of this section as follows. First, under a new assumption (say $A1$), we show that any feasible solution to the optimization problem \eqref{Q_sup} admits an upper bound (Lemma 5). We provide a control design procedure under which the assumption $A1$ is satisfied (Lemma 6). Then, we show that the solution to the optimization problem \eqref{Q_sup} is computationally scalable (Theorem 2) under assumption $A1$ and the results of Lemma 5. As a result, the optimal attack policy and the optimal security allocation problem can be inherently computationally scalable (Corollaries 2 and 3).

\begin{Lemma}\label{lem:WboundQ}
Let $\Qc(\Mc, \Ac)$ denote the feasible set of \eqref{Q_sup} and suppose that the following condition \eqref{cond:W2_Hinf} holds
    \begin{align}
        W^2 \succeq \left[ \min_{m \in \Mc} \delta_m \right]^{-1} V_{\infty}(\Ac) ~I,
        \label{cond:W2_Hinf}
    \end{align} 
where $V_{\infty}(\Ac)$ is defined in \eqref{Q_sup_im_Hinf} and $I$ is an identity matrix.
Then, one obtains $W^2 \succeq  \left[ \min_{m \in \Mc} \delta_m \right]^{-1} q \, I,~\forall \, q \in \Qc(\Mc, \Ac)$. \QET
\end{Lemma}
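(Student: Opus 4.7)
The plan is to argue that every $q$ in the feasible set $\Qc(\Mc,\Ac)$ is dominated by the unmonitored worst-case value $V_\infty(\Ac) = V(\emptyset,\Ac)$, after which the hypothesis \eqref{cond:W2_Hinf} will immediately lift this scalar bound to the claimed L\"owner-order inequality. The justification is a simple monotonicity of the supremum in \eqref{Q_sup}: dropping the stealthiness constraints $\norm{y_m^a}^2_{\Lc_2} \leq \delta_m$ enlarges the admissible set of attack signals, so $V(\Mc,\Ac) \leq V(\emptyset,\Ac) = V_\infty(\Ac)$, and hence every attained objective value for the problem with monitors is at most $V_\infty(\Ac)$.

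Carrying this out, I would fix an arbitrary $q \in \Qc(\Mc,\Ac)$ and let $\zeta$ be an associated feasible attack signal, which in particular satisfies the bounded-energy constraint $\norm{e_j^\top \zeta}^2_{\Lc_2} \leq E$ together with the boundary conditions $x^a(0)=x^a(\infty)=0$ appearing in the definition of $V_\infty(\Ac)$ in \eqref{Q_sup_im_Hinf}. Since this $\zeta$ is then admissible for the unmonitored problem, the corresponding performance-output energy is bounded by $V_\infty(\Ac)$, giving $q \leq V_\infty(\Ac)$. Multiplying by the positive scalar $\bigl[\min_{m\in\Mc}\delta_m\bigr]^{-1}$ and chaining with \eqref{cond:W2_Hinf} then yields
\[
W^2 \;\succeq\; \bigl[\min_{m\in\Mc}\delta_m\bigr]^{-1}\, V_\infty(\Ac)\, I \;\succeq\; \bigl[\min_{m\in\Mc}\delta_m\bigr]^{-1}\, q\, I,
\]
which is the claimed inequality.

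The only delicate step is justifying that $q \leq V_\infty(\Ac)$ genuinely holds for every $q \in \Qc(\Mc,\Ac)$: one must interpret the feasibility set so that each $q$ is realized by some admissible $\zeta$ rather than chosen as a free upper bound. Once this reading is fixed (equivalently, $\Qc(\Mc,\Ac) \subseteq [0,V(\Mc,\Ac)]$ combined with $V(\Mc,\Ac) \leq V_\infty(\Ac)$), the rest of the argument is routine and does not need any further technical machinery beyond the elementary scaling $q \leq V_\infty(\Ac) \Rightarrow q\,I \preceq V_\infty(\Ac)\,I$ in the L\"owner order.
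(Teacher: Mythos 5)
Your proposal is correct and follows essentially the same route as the paper's proof: both rest on the monotonicity observation $V(\Mc,\Ac)\leq V(\emptyset,\Ac)=V_{\infty}(\Ac)$ obtained by dropping the stealthiness constraints, after which the hypothesis \eqref{cond:W2_Hinf} is chained with $q\leq V(\Mc,\Ac)$ to give the L\"owner inequality. Your extra remark about reading $\Qc(\Mc,\Ac)$ as the set of attained values (so that $q\leq V(\Mc,\Ac)$ indeed holds) is a fair point of care that the paper's two-line proof leaves implicit.
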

\begin{proof}
    By construction, the optimal solution to \eqref{Q_sup_im_Hinf} is not smaller than that to \eqref{Q_sup}, i.e., $V_{\infty}(\Ac) \geq V(\Mc,\Ac)$. 
    If the condition \eqref{cond:W2_Hinf} holds, then $W^2 \succeq \left[ \min_{m \in \Mc} \delta_m \right]^{-1} V(\Mc,\Ac) \, I$, concluding the proof.
\end{proof}
\begin{Lemma}
\label{lem:self_loop}
    Consider the networked control system \eqref{sys:xa} associated with a graph $\Gc = (\Vc,\Ec,A,\Theta)$ where positive diagonal matrix $\Theta$ stands for the self-loop control gains at all the nodes.
    There exists a suitable diagonal matrix $\Theta$ such that the condition \eqref{cond:W2_Hinf} holds.  \QET
\end{Lemma}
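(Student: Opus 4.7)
The plan is to exploit the SDP characterization of $V_\infty(\Ac) = V(\emptyset,\Ac)$ supplied by Lemma~\ref{lem:Q_computation} (specialized with an empty monitor set, i.e.\ \eqref{Q_min_sdp} with no $\gamma_\Ac^m$ terms) and argue that the self-loop gains can be chosen large enough to push $V_\infty(\Ac)$ below $\lambda_{\min}(W^2)\,\min_{m}\delta_m$, which is exactly what \eqref{cond:W2_Hinf} asks for. Since the operator fixes $W$ with strictly positive diagonal entries, $\lambda_{\min}(W^2)>0$ is a fixed positive quantity, so the entire burden of the proof is to make $V_\infty(\Ac)$ arbitrarily small through the design of $\Theta$.

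First, I would restrict attention to the homogeneous choice $\Theta = \theta I$ with a scalar $\theta>0$ to be determined. This gives $L+L^\top = 2\theta I + M_0$, where $M_0 \triangleq 2\,\textbf{diag}(A\mathbf{1}) - A - A^\top$ is a symmetric matrix depending only on the adjacency matrix of $\Gc$. Next, I would plug the simple candidate $P_\Ac = \epsilon I$ and $\psi_\Ac = \eta \mathbf{1}$, with $\epsilon,\eta>0$ free, into the SDP from Lemma~\ref{lem:Q_computation}. A Schur complement on the lower-right block, combined with the trivial bound $B_\Ac B_\Ac^\top \preceq I$ (since the columns of $B_\Ac$ are distinct standard basis vectors), reduces the LMI to the scalar-style inequality
$$
2\epsilon\theta \;\geq\; \lambda_{\max}\!\bigl(W^2 - \epsilon M_0\bigr) + \frac{\epsilon^2}{\eta}.
$$
For any prescribed $\eta>0$, this is satisfied by first fixing any convenient $\epsilon>0$ and then taking $\theta$ large enough; the candidate is then feasible and yields
$$
V_\infty(\Ac) \;\leq\; E\,\mathbf{1}^\top\psi_\Ac \;=\; E\,\alpha_k\,\eta.
$$

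The final step is to pick $\eta \leq \lambda_{\min}(W^2)\,\min_{m}\delta_m / (E\,\alpha_{\max})$, with $\alpha_{\max}\triangleq \max_{\alpha_k\in\Tc}\alpha_k$, and then the corresponding $\theta$. Because the only attack-set-dependent quantity in the sufficient condition is $B_\Ac B_\Ac^\top$, which is uniformly bounded by $I$, the same $\Theta = \theta I$ works simultaneously for every $\Ac \in \bigcup_k \Sc_{\alpha_k}$. This is the real subtlety of the argument: without the uniform bound $B_\Ac B_\Ac^\top \preceq I$, a separate self-loop gain would be needed per attack scenario, which would defeat the entire purpose of a design-time control synthesis. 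An alternative, transfer-function-flavoured route would be to note that the diagonal norm-bound $V_\infty(\Ac)$ is dominated by $E\,\alpha_k\,\|W(sI+L)^{-1}B_\Ac\|_\infty^2$ and to invoke the $1/\theta$ decay of this $\mathcal{H}_\infty$ norm as $\theta\to\infty$; either route delivers the existence statement required by the lemma.
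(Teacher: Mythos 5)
Your proof is correct, but it takes a genuinely different route from the paper's. The paper argues by comparison: adding a positive diagonal increment $\Delta\Theta$ to the self-loop gains turns the SDP for $V_\infty(\Ac)$ into the SDP of a system with a strictly smaller performance weight $\tilde W$ (since $\tilde W^\top \tilde W = W^2 - \Delta\Theta P_\Ac - P_\Ac \Delta\Theta \prec W^2$), whence $\tilde V_\infty(\Ac) < V_\infty(\Ac)$; the condition \eqref{cond:W2_Hinf} is then claimed to follow from this strict monotone decrease. Your argument instead exhibits an explicit feasible dual certificate, $P_\Ac = \epsilon I$ and $\psi_\Ac = \eta\mathbf{1}$, in \eqref{Q_min_sdp} with $\Mc = \emptyset$; the Schur complement of the LMI with respect to the $-\eta I$ block together with $B_\Ac B_\Ac^\top \preceq I$ gives exactly your scalar sufficient condition $2\epsilon\theta \geq \lambda_{\max}(W^2 - \epsilon M_0) + \epsilon^2/\eta$, and weak duality then yields the quantitative bound $V_\infty(\Ac) \leq E\,\alpha_k\,\eta$ for all $\theta$ above an explicit threshold. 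What your route buys is significant: it shows $V_\infty(\Ac)$ can be driven below \emph{any} prescribed level (in particular below $\lambda_{\min}(W^2)\min_i \delta_i$, which covers \eqref{cond:W2_Hinf} for every possible monitor set), it is constructive in $\theta$, and the uniformity over all attack sets via $B_\Ac B_\Ac^\top \preceq I$ and $\alpha_{\max}$ is exactly what Corollaries~\ref{Cor:Op_Attack_scalable} and \ref{Cor:Se_Allo_scalable} require. By contrast, the paper's monotonicity argument alone only shows that $V_\infty(\Ac)$ strictly decreases as the gains grow, which does not by itself rule out convergence to a positive limit above the required level; your certificate closes that gap. The only minor points to tidy up are to state explicitly that you take the threshold minimum over all of $\Vc$ (since $\Mc$ is not known at control-design time) and that the upper bound $V_\infty(\Ac) \leq E\,\mathbf{1}^\top\psi_\Ac$ follows from feasibility of the minimization \eqref{Q_min_sdp} (weak duality), neither of which affects correctness.
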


The proof of Lemma~\ref{lem:self_loop} is given in Appendix~\ref{app:lempf_self_loop}. 
Based on the results of Lemmas~\ref{lem:WboundQ} and \ref{lem:self_loop} and the result in \cite[Thm. 1]{rantzer2015kalman}, the worst-case disruption \eqref{Q_sup} can be computed in a more efficient way, which is presented in the following theorem.

\begin{Theorem}
[Scalable assessment]
\label{th:scalable:Q}
Let $\Qc(\Mc, \Ac)$ denote the set of feasible values of \eqref{Q_sup} and suppose that the condition \eqref{cond:W2_Hinf} holds. Then, the optimal value of the optimization problem \eqref{Q_sup} is obtained by solving the convex SDP \eqref{Q_min_sdp} with a positive definite diagonal matrix variable $P_{\Ac}$. \QET
\end{Theorem}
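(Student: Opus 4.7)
My plan is to invoke the diagonal KYP result for positive systems \cite[Thm.~1]{rantzer2015kalman} after verifying that (i) the closed-loop system \eqref{sys:xa}--\eqref{sys:yma} is a positive linear system, and (ii) condition \eqref{cond:W2_Hinf}, via Lemma~\ref{lem:WboundQ}, supplies enough structural dominance by $W^2$ so that the KYP-type inequality in \eqref{Q_min_sdp} admits a diagonal positive definite solution without loss of optimality.

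First, I would establish the positivity structure of the network dynamics. The drift matrix $-L$ is Metzler, since its off-diagonal entries equal the non-negative adjacency weights $A_{ij}$, while its diagonal entries $-\theta_i - \sum_{j\in\Nc_i}A_{ij}$ are real. The attack input matrix $B_\Ac=[e_{a_1},\dots,e_{a_{\alpha_k}}]$ is elementwise non-negative, and so are the diagonal performance weight $W$ and every monitor-output row $e_m^\top$. Hence \eqref{sys:xa}--\eqref{sys:yma} is a positive linear system in the sense of \cite{rantzer2015kalman}, and the quadratic forms in the dissipation supply rate derived in the proof of Lemma~\ref{lem:Q_computation} are all diagonal: $W^2$ in the performance term, $\textbf{diag}(\sum_{m\in\Mc}\gamma^m_\Ac e_m)$ in the stealthiness term, and $\textbf{diag}(\psi_\Ac)$ in the attack-energy term.

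Second, I would rewrite the LMI in \eqref{Q_min_sdp} in standard KYP form. Taking a Schur complement with $\textbf{diag}(\psi_\Ac)\succ 0$ yields
\begin{align*}
-L^\top P_\Ac - P_\Ac L + W^2 - \textbf{diag}\!\Big(\sum_{m\in\Mc}\gamma^m_\Ac e_m\Big) + P_\Ac B_\Ac \,\textbf{diag}(\psi_\Ac)^{-1} B_\Ac^\top P_\Ac \preceq 0,
\end{align*}
which is exactly the dissipation inequality for the positive system with the supply rate introduced in the proof of Lemma~\ref{lem:Q_computation}. By Lemma~\ref{lem:WboundQ}, hypothesis \eqref{cond:W2_Hinf} gives $W^2 \succeq [\min_{m\in\Mc}\delta_m]^{-1} q\,I$ for every feasible $q \in \Qc(\Mc,\Ac)$, so the performance term dominates the stealthiness perturbation pointwise on the diagonal. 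Applying \cite[Thm.~1]{rantzer2015kalman} to this positive-system dissipation inequality, any symmetric positive semidefinite certificate $P_\Ac$ can be replaced by a diagonal positive definite certificate attaining the same value of the associated performance functional; since neither the objective $\sum_{m\in\Mc}\gamma^m_\Ac\delta_m + E\,\textbf{1}^\top\psi_\Ac$ nor the constraints on $(\gamma^m_\Ac,\psi_\Ac)$ involve the off-diagonal entries of $P_\Ac$, the optimum of \eqref{Q_min_sdp} is unchanged by the diagonal restriction. Combined with the strong duality already established in the proof of Lemma~\ref{lem:Q_computation}, the diagonal-restricted SDP attains the optimal value of \eqref{Q_sup}.

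The main obstacle I anticipate is verifying that the diagonal perturbation $\textbf{diag}(\sum_{m\in\Mc}\gamma^m_\Ac e_m)$ and the performance term $W^2$ together preserve the exact sign/monotonicity structure used by Rantzer, whose theorem is stated for a \emph{clean} positive system with non-negative output maps. If the mismatch is more than cosmetic, I would exploit Lemma~\ref{lem:WboundQ} to absorb the stealthiness term into $W^2$, reducing the inequality to a standard positive-system dissipation inequality driven by $(-L,B_\Ac)$ with a fully non-negative output map, apply \cite[Thm.~1]{rantzer2015kalman} to this reduced system to obtain a diagonal certificate, and then transfer the certificate back to the original LMI by a Schur-complement argument.
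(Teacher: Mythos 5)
Your approach coincides with the paper's: both reduce \eqref{Q_sup} to its dissipation-inequality dual, observe that $(-L,B_\Ac)$ with the diagonal output maps $W$ and $\{e_m^\top\}_{m\in\Mc}$ forms an internally positive system, and invoke the diagonal KYP lemma for positive systems \cite[Thm.~1]{rantzer2015kalman} (stated in the paper as Proposition~\ref{thm_anders}) to restrict $P_\Ac$ to a positive definite diagonal matrix without loss of optimality; your Schur-complement rewriting is only a cosmetic difference from the form \eqref{LMI_anders} the paper uses.

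The one step you gloss over is exactly the step the paper's proof spends most of its effort on: verifying the hypothesis $C_2^\top C_2 \geq C_1^\top C_1$ of Rantzer's theorem, i.e., $W^2 \succeq \sum_{m\in\Mc}\gamma^m_\Ac\,\textbf{diag}(e_m)$. You justify this by citing Lemma~\ref{lem:WboundQ}, but that lemma bounds the \emph{primal} feasible values $q\in\Qc(\Mc,\Ac)$, whereas the perturbation you must dominate is built from the \emph{dual} multipliers $\gamma^m_\Ac$; the inference ``$W^2\succeq[\min_{m}\delta_m]^{-1}q\,I$ for all feasible $q$, so $W^2$ dominates the stealthiness term'' does not follow without an extra link between $q$ and $\gamma^m_\Ac$. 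That link comes from duality: since the dual objective satisfies
\begin{equation*}
\min_{m\in\Mc}\delta_m \cdot \max_{m\in\Mc}\gamma^{m}_\Ac \;<\; \sum_{m\in\Mc}\gamma^m_\Ac\delta_m + E\,\textbf{1}^\top\psi_\Ac \;\leq\; V(\Mc,\Ac)\;\leq\; V_\infty(\Ac),
\end{equation*}
one gets $\max_{m\in\Mc}\gamma^{m}_\Ac < \left[\min_{m\in\Mc}\delta_m\right]^{-1}V_\infty(\Ac)$, and only then does \eqref{cond:W2_Hinf} yield $W^2 \succ \max_{m\in\Mc}\gamma^{m}_\Ac\, I \succeq \sum_{m\in\Mc}\gamma^m_\Ac\,\textbf{diag}(e_m)$. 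This is the chain \eqref{eq2} and ``hypothesis H1'' in the paper, and is essentially the content of Lemma~\ref{lem:Q_inf}, which you never invoke. Your anticipated fallback (absorbing the stealthiness term into $W^2$) does not resolve this, since it presupposes the very bound on $\gamma^m_\Ac$ that is missing. With that link inserted, your argument is complete and matches the paper's.
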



\begin{proof} 
See Appendix~\ref{app:th_anders}.
\end{proof}
Theorem \ref{th:scalable:Q} states that the worst-case disruption \eqref{Q_sup} can be obtained by solving the SDP \eqref{Q_min_sdp} with a diagonal matrix $P_\Ac$. Thus, we can infer that, under an enforceable assumption stated in \eqref{cond:W2_Hinf}, the computational complexity of \eqref{Q_sup} increases linearly with the size of the network, for any given $\Ac$. 

To explain this briefly, consider a network with $N$ nodes.
Then, in general, the optimization problem \eqref{Q_min_sdp} has $\frac{N(N+1)}{2}+|\mathcal{M}| + |\mathcal{A}|$ variables. To detail, matrix $P_\Ac$ has $\frac{N(N+1)}{2}$ variables, and the Lagrange multipliers $\gamma^m_\Ac$ and $\psi_\Ac$ account for the remaining variables, respectively. On the other hand, our proposed solution in \textit{Theorem \ref{th:scalable:Q}} has only $N+|\mathcal{M}| + |\mathcal{A}|$ variables. This reduction in the number of variables stems from the fact that the matrix ${P}_\Ac$ is diagonal. 

This complexity reduction allows for our optimal allocation algorithm proposed in \textit{Theorem~\ref{th:general_sol}} to be scalable. More specifically, the following corollaries are presented as a solution to \textit{Problem~\ref{prob:impatient_sol}}.

\begin{Corollary}
[Scalable optimal attack policy]
    \label{Cor:Op_Attack_scalable}
    Suppose the condition \eqref{cond:W2_Hinf} holds for all possible attack sets $\Ac$. The solution to \eqref{Q_max} is given by solving \eqref{Q_sdp} where $P_{\Ac}$ is a diagonal matrix variable for all attack sets $\Ac$. \QET
\end{Corollary}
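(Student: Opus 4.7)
The plan is to combine the two-level dualization used in Lemma~\ref{lem:Q_computation} with the diagonal-storage-matrix reduction established in Theorem~\ref{th:scalable:Q}. First, I would recall the equivalent reformulation \eqref{Qstar} of \eqref{Q_max}, which writes $Q(\Mc\,|\,\alpha_k)$ as the minimum of a scalar $Q_k \in \Rbb_{>0}$ subject to the epigraphic constraints $V(\Mc,\Ac) \leq Q_k$ for every admissible attack set $\Ac \in \Sc_{\alpha_k}$. For each fixed $\Ac$, Lemma~\ref{lem:Q_computation} already shows that $V(\Mc,\Ac)$ equals the optimal value of the inner SDP \eqref{Q_min_sdp}, in which the storage matrix $P_\Ac$ ranges over $\Sbb^N_{\succeq 0}$. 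Substituting this SDP representation back into \eqref{Qstar} and aggregating the LMIs indexed by $\Ac \in \Sc_{\alpha_k}$ is exactly how \eqref{Q_sdp} was obtained.

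Next, I would invoke Theorem~\ref{th:scalable:Q} at the inner level. Under the standing hypothesis that \eqref{cond:W2_Hinf} holds for \emph{every} $\Ac \in \Sc_{\alpha_k}$, Theorem~\ref{th:scalable:Q} asserts that \eqref{Q_min_sdp} admits the same optimal value when the storage matrix $P_\Ac$ is restricted to be positive definite and \emph{diagonal}. Because the tuple of variables $( \{\gamma^m_\Ac\}_{m \in \Mc},\,\psi_\Ac,\,P_\Ac )$ associated with each attack set $\Ac$ appears in \eqref{Q_sdp} only through its own LMI block and its own scalar constraint $\sum_{m \in \Mc} \gamma^m_\Ac \delta_m + E \, \textbf{1}^\top \psi_\Ac \leq Q_k$, the inner minimization decouples across $\Ac \in \Sc_{\alpha_k}$. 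Consequently, imposing the diagonal restriction on each $P_\Ac$ independently does not alter the optimal value of \eqref{Q_sdp}, and the restricted SDP still equals $Q(\Mc\,|\,\alpha_k)$, i.e., the solution to \eqref{Q_max}.

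The only step that warrants care is the separability argument that justifies applying Theorem~\ref{th:scalable:Q} per-$\Ac$ inside the aggregated SDP \eqref{Q_sdp}; this follows by direct inspection of \eqref{Q_sdp} since the variables corresponding to distinct attack sets $\Ac$ and $\Ac'$ share no LMI and only couple through the common scalar $Q_k$, which is itself bounded from below by each per-$\Ac$ constraint. Once this decoupling is noted, the result follows immediately by replacing each $P_\Ac$ with a diagonal matrix variable, which is precisely the statement of the corollary.
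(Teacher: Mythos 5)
Your proposal is correct and follows essentially the same route as the paper, which states Corollary~\ref{Cor:Op_Attack_scalable} as an immediate consequence of Theorem~\ref{th:scalable:Q}: since the variables $(\{\gamma^m_\Ac\}_{m\in\Mc},\psi_\Ac,P_\Ac)$ for distinct attack sets in \eqref{Q_sdp} couple only through the scalar $Q_k$, the diagonal restriction on each $P_\Ac$ can be applied per attack set without changing the optimal value. Your explicit separability remark is a sound (and slightly more careful) articulation of the step the paper leaves implicit.
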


\begin{Corollary}
[Scalable security allocation]
    \label{Cor:Se_Allo_scalable}
Suppose the condition \eqref{cond:W2_Hinf} holds for all possible attack sets $\Ac$. Then, the solution to the security allocation problem \eqref{R_min} is given by solving the convex SDP \eqref{R_sdp} 
where $P_{\Ac}$ is a diagonal matrix.\QET
\end{Corollary}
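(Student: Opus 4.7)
My plan is to derive Corollary 3 as a direct consequence of combining Theorem~\ref{th:general_sol} with Theorem~\ref{th:scalable:Q}, by showing that the diagonal restriction on $P_\Ac$ established in the latter propagates without loss into the mixed-integer SDP of the former. The starting point is that Theorem~\ref{th:general_sol} already certifies that \eqref{R_sdp} is an exact reformulation of \eqref{R_min}, constructed by substituting the per-attack-set SDP for $V(\Mc,\Ac)$ from Lemma~\ref{lem:Q_computation} into the defender's expected-cost minimization \eqref{R_min}. The only remaining question is whether restricting each $P_\Ac$ in \eqref{R_sdp} to be diagonal preserves the optimal value.

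First, I would fix any feasible monitor set $\Mc$ (equivalently, binary vector $z$) and any attack set $\Ac \in \Sc_{\alpha_k}$. For this fixed pair, the inner SDP over $(\omega_\Ac,\psi_\Ac,P_\Ac)$ embedded in \eqref{R_sdp} coincides, after identifying $\omega_\Ac$ with the masked multipliers $\gamma^m_\Ac$ supported on $\Mc$, with the SDP \eqref{Q_min_sdp} computing $V(\Mc,\Ac)$. Since the hypothesis of Corollary~\ref{Cor:Se_Allo_scalable} assumes that \eqref{cond:W2_Hinf} holds for every admissible $\Ac$, Theorem~\ref{th:scalable:Q} applies for each such inner problem and guarantees that the optimum is achieved with a diagonal positive semidefinite $P_\Ac$. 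Thus, in the joint optimization \eqref{R_sdp}, restricting every $P_\Ac$ to be diagonal does not shrink the feasible value of $Q_k$ attainable for any fixed $(z,\Ac)$, and therefore does not change the outer minimum $\kappa^\top z + \sum_k \phi_k Q_k$.

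The one subtlety to handle carefully is that the defender's outer minimization in \eqref{R_sdp} couples the choice of $z$ (hence of $\Mc$) with the inner variables $(\omega_\Ac,\psi_\Ac,P_\Ac)$ through the constraints $\omega_\Ac \leq \hat V_\infty z$ and the LMI. I would verify that, because the diagonality conclusion of Theorem~\ref{th:scalable:Q} holds uniformly over every $\Ac \in \Sc_{\alpha_k}$ and is independent of the specific $z$, the restriction $P_\Ac \in \text{diag}(\Sbb^N_{\succeq 0})$ can be imposed simultaneously for all attack sets without altering the attainable value of $Q_k$ for any feasible $z$. This independence of the diagonal property from the outer variables is the crux of the argument and the only potential obstacle; once it is established, the diagonal-restricted program has the same optimal value as \eqref{R_sdp}, the minimizer $z^\star$ is unchanged, and, by Theorem~\ref{th:general_sol}, it encodes the optimal monitor set solving \eqref{R_min}. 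Finally, I would remark that under the diagonal restriction each LMI reduces to a structured inequality whose complexity scales linearly in $N$, matching the scalability statement of Corollary~\ref{Cor:Op_Attack_scalable} and completing the proof.
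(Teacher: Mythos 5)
Your proposal is correct and takes essentially the same route as the paper, which states Corollary~\ref{Cor:Se_Allo_scalable} without a separate proof precisely because it is the immediate combination of Theorem~\ref{th:general_sol} (exactness of \eqref{R_sdp}) with Theorem~\ref{th:scalable:Q} (diagonal $P_\Ac$ suffices for each inner SDP under \eqref{cond:W2_Hinf}). The one caveat, which you correctly identify as the crux but which the paper also glosses over, is that \eqref{cond:W2_Hinf} depends on the monitor set through $\min_{m \in \Mc} \delta_m$, so the hypothesis ``holds for all possible attack sets $\Ac$'' must be read as holding uniformly over candidate monitor sets as well (e.g., by using the conservative bound $\min_{i \in \Vc} \delta_i$), after which the diagonal restriction can indeed be imposed simultaneously for all $\Ac$ without changing the optimal value.
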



\begin{Remark}
    The computation to guarantee the satisfaction of \eqref{cond:W2_Hinf} in Colloraries~\ref{Cor:Op_Attack_scalable}-\ref{Cor:Se_Allo_scalable} can be performed by parallel computations.     \QET
\end{Remark}


\section{Numerical Examples}
\label{sec:simulation}
In this section, we validate the obtained main results which are presented in \textit{Theorem~\ref{th:general_sol}}, \textit{Theorem~\ref{th:scalable:Q}}, and \textit{Corollary~\ref{Cor:Se_Allo_scalable}} with different network sizes. 
The simulation is performed using Matlab 2023b with YALMIP 2023 toolbox \cite{lofberg2004yalmip} and MOSEK solver.


\subsection{Optimal security allocation: solution to \textit{Problem~\ref{prob:general_sol}}}
We validate the obtained result of \textit{Theorem~\ref{th:general_sol}} by comparing the solution obtained by solving \eqref{R_sdp} with the solution from an exhaustive-search method. The exhaustive-search method computes the defense cost \eqref{R_min} for all the admissible monitor sets and gets the one that yields the minimum defense cost. The comparison result is depicted in Figure~\ref{fig:10Nodeopt} where the defense cost computed by \eqref{R_sdp} is represented by a red star while the ones computed by the exhaustive-search method are illustrated by black dots in each experiment.

In Figure~\ref{fig:10Nodeopt},
we consider 20 Erdős–Rényi random directed connected graphs where an edge is included to connect two vertices with a probability of 0.25, in which each network contains 10 nodes ($N = 10$). The parameters of the networks can be chosen as follows: $\theta_i = 0.7$, $\delta_i = 0.5$, $w_i = 1$, and $\kappa_i = 0.3$ for all $i \in \Vc$; the sensor budget $\beta = 3$; the maximum attack energy $E = 10$; the number of attack types $n_a = 3$;
the set of attack types $\Tc = \{ \alpha_1, \, \alpha_2, \, \alpha_3 \}$ where $\phi_1 = 0.5$, $\phi_2 = 0.35$, and $\phi_3 = 0.15$ (\textit{Assumption~\ref{assumption:adversary_type}} holds).

By observing all the experimental results in Figure~\ref{fig:10Nodeopt}, the red stars (solution to \eqref{R_sdp}) consistently align with the minimum values indicated by the black dots (solution given by the exhaustive-search method). Numerically, the relative deviation between the optimal solution by solving \eqref{R_sdp} and the minimum values obtained from the exhaustive-search method lies in the range $[-3.13, \, 1.32]\times 10^{-4}\%$.
This result shows that the SDP problem \eqref{R_sdp} enables us to successfully find the optimal set of monitor nodes, thereby corroborating the validity of \textit{Theorem~\ref{th:general_sol}}. 

\begin{figure}[!t]
    \centering
    \includegraphics[width=0.45\textwidth]{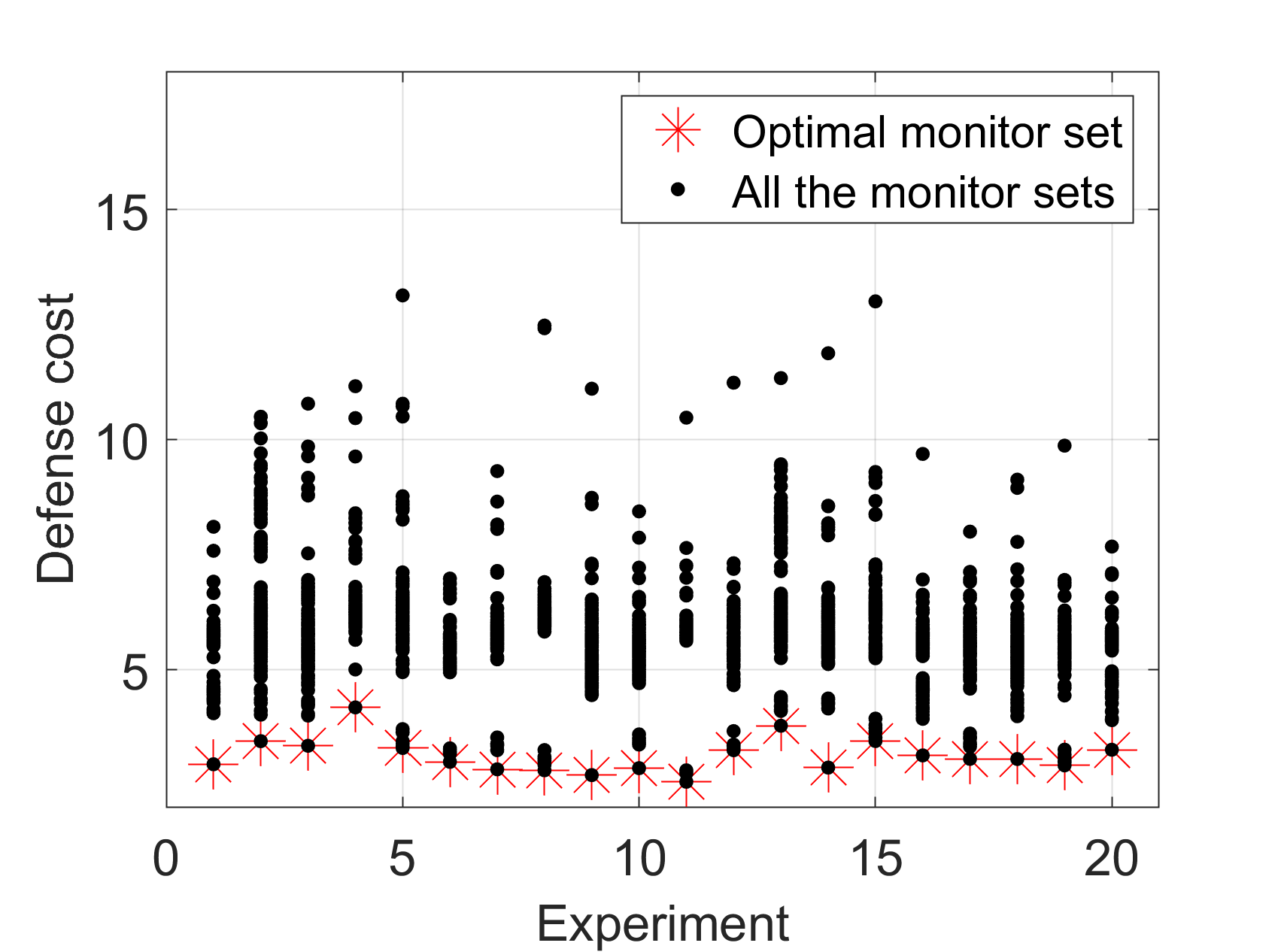}
    \caption{The red stars are the optimal values found by Theorem~\ref{th:general_sol} while black dots are found by an exhaustive-search method. The figure shows that the optimal defense costs found by Theorem~\ref{th:general_sol} are extremely close to the minimum defense costs found by the exhaustive-search method. In more detail, the relative deviation between them lies in the range $[-3.13, \, 1.32]\times 10^{-4}\%$.}
    \label{fig:10Nodeopt}
\end{figure}
\subsection{Scalable algorithm: solution to \textit{Problem~\ref{prob:impatient_sol}}}
In this part, we examine the results obtained from Theorem~\ref{th:scalable:Q} and Corollary~\ref{Cor:Se_Allo_scalable} to see how the obtained results reduce the computation complexity. The computation complexity is assumed to correlate with the computation time used to solve optimization problems. We investigate the computation time in solving \eqref{Q_min_sdp} and \eqref{R_sdp} with different network sizes. 
The numerical results are illustrated in Figure~\ref{fig:comparison}. 
To validate the results of \textit{Theorem~\ref{th:scalable:Q}} and \textit{Corollary~\ref{Cor:Se_Allo_scalable}}, we solve \eqref{Q_min_sdp} and \eqref{R_sdp} under two scenarios. The first scenario utilizes $P_\Ac$ as symmetric positive semi-definite matrices while the second scenario uses $P_\Ac$ as diagonal positive matrices.
It is worth noting that the hardware limitation only allows us to solve \eqref{R_sdp} with a network of at most 70 nodes in the first scenario. Meanwhile, solving \eqref{R_sdp} with a network of more than 70 nodes in the second scenario is possible. However, results are not reported due to a lack of comparison.
According to the results of \textit{Theorem~\ref{th:scalable:Q}}, the two scenarios obtain the same optimal value even though the solved matrices $P_\Ac$ are different. As discussed in the paragraph below \textit{Theorem~\ref{th:scalable:Q}}, the second scenario involves significantly fewer variables in the optimization problem, thus significantly reducing the computation complexity. Therefore, the computation time consumed by the second scenario should be less than that consumed by the first scenario. Indeed, we show the comparison of the computation time consumed by the two scenarios with different network sizes in Figure~\ref{fig:comparison}. These figures demonstrate that using the second scenario (diagonal matrix variables $P_\Ac$) gives us the same optimal value while consuming significantly less computation time. More specifically, for $N = 500$ in Figure~\ref{fig:analysisdisruption}, solving \eqref{Q_min_sdp} took 91.39 minutes and 0.08 minutes on average in the first and second scenarios, respectively. For $N=70$ in Figure~\ref{fig:securityallocationcompare}, solving \eqref{R_sdp} took 167.38 minutes and 2.85 minutes on average in the first and second scenarios, respectively.

\begin{figure}[!t]
    \centering
    \begin{subfigure}
        {0.45\textwidth}
        \includegraphics[width=\textwidth]{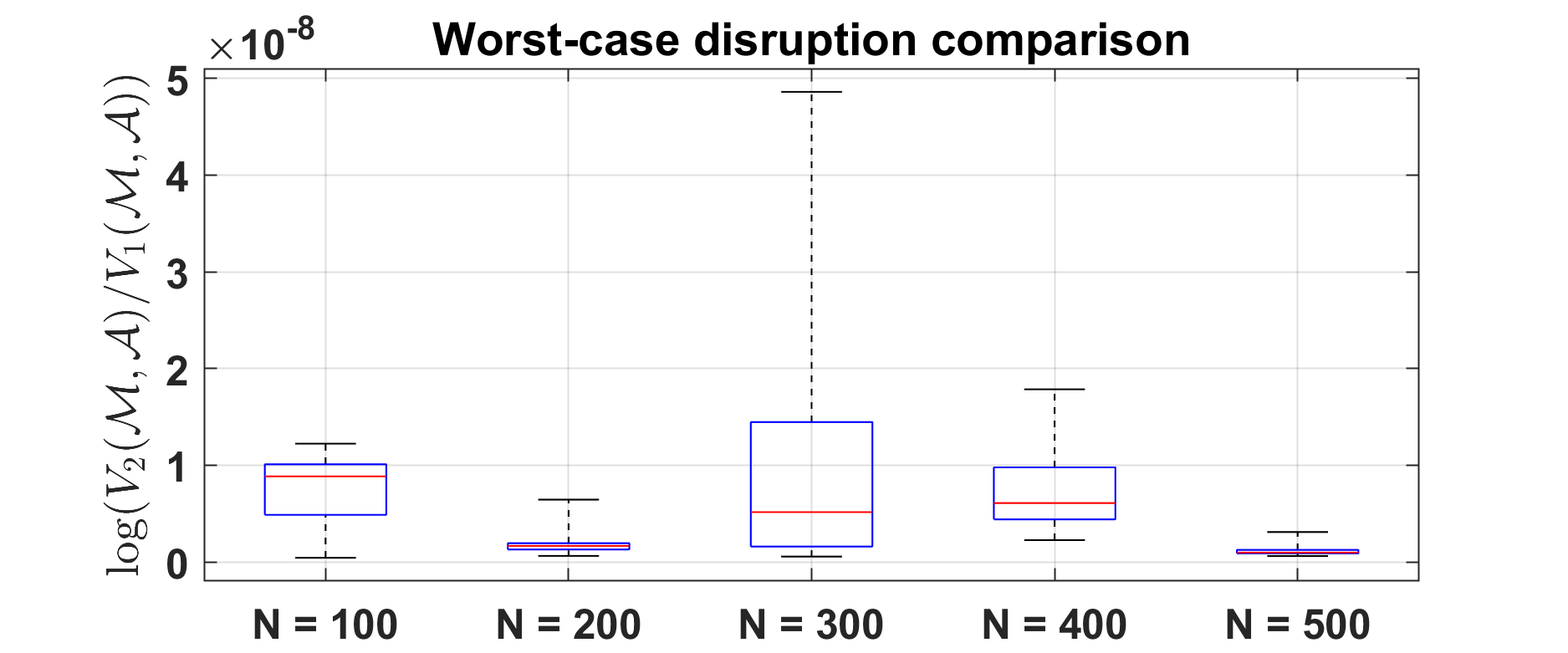}
    \end{subfigure}
    \begin{subfigure}
        {0.45\textwidth}
        \includegraphics[width=\textwidth]{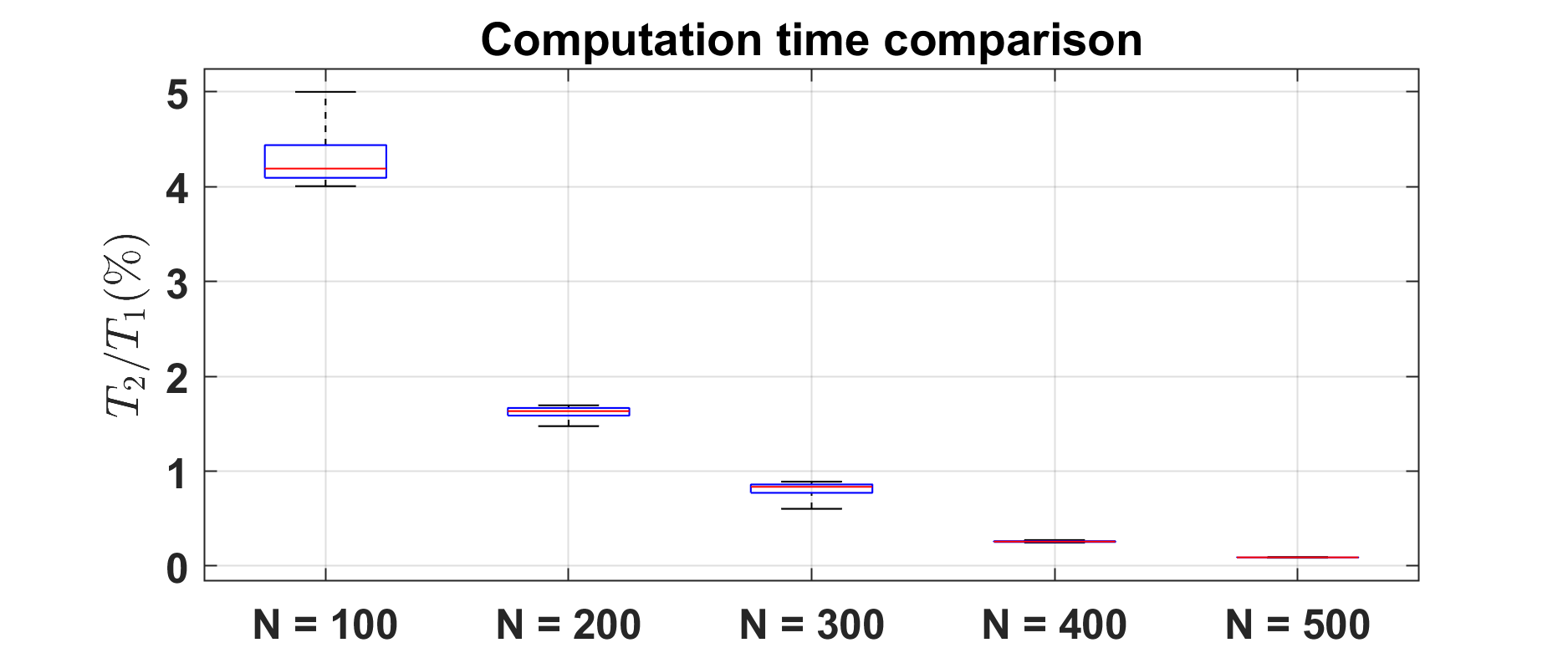}
        \caption{} \label{fig:analysisdisruption}
    \end{subfigure}
    \begin{subfigure}
        {0.45\textwidth}
        \includegraphics[width=\textwidth]{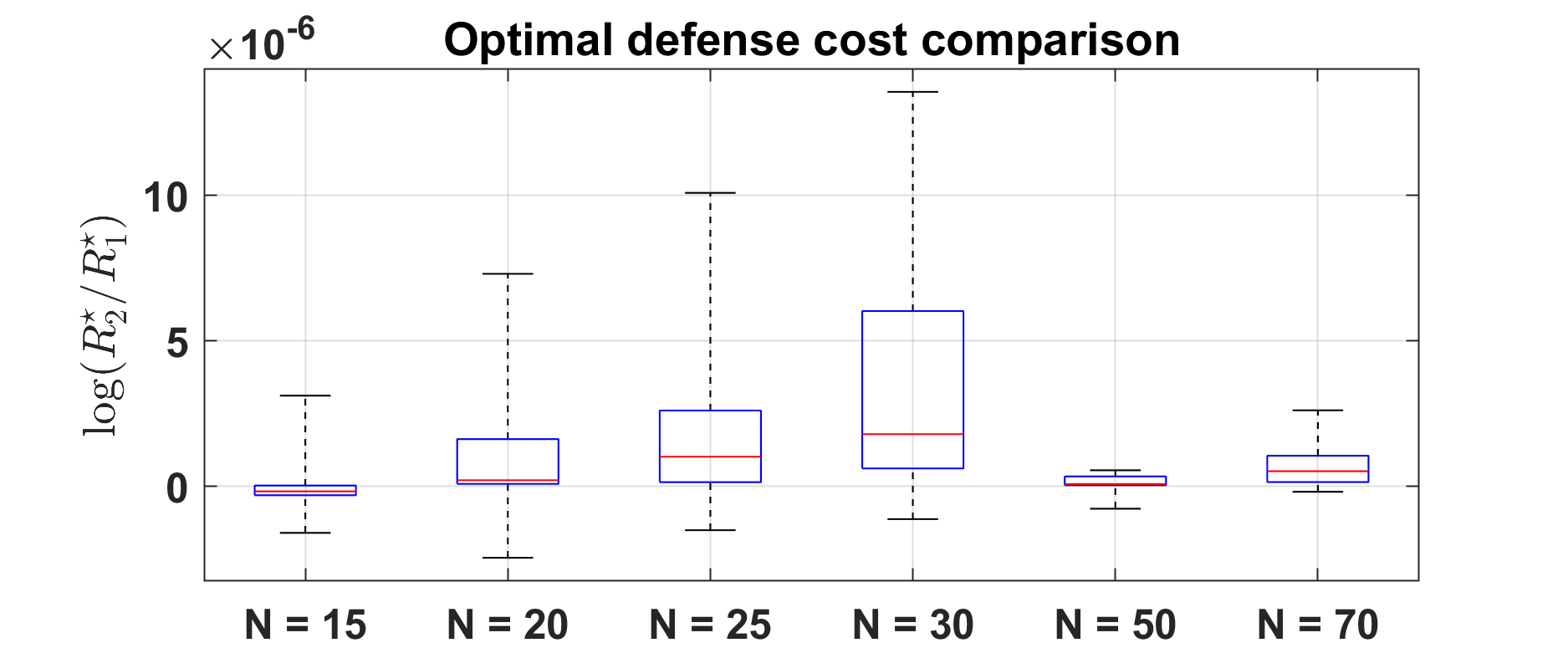}
    \end{subfigure}
    \begin{subfigure}
        {0.45\textwidth}
        \includegraphics[width=\textwidth]{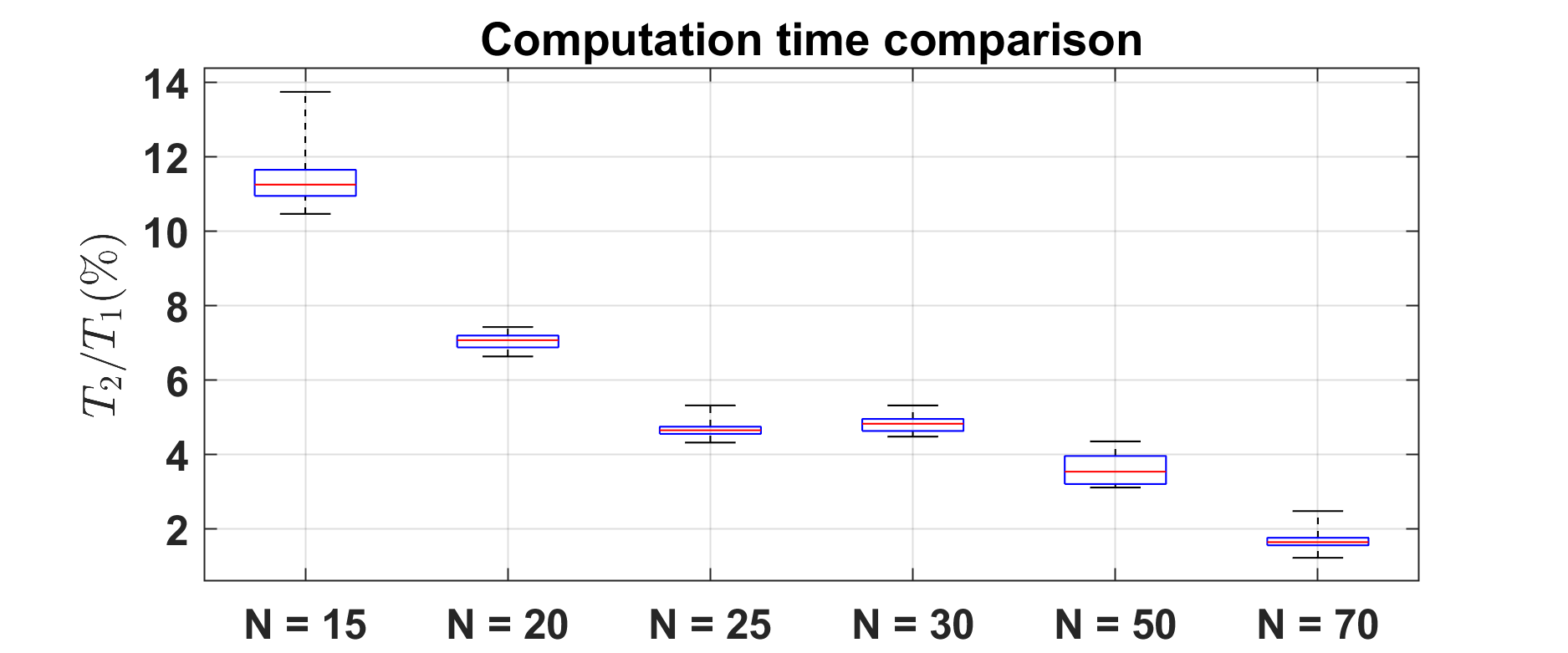}
        \caption{} \label{fig:securityallocationcompare}
    \end{subfigure}
    \caption{Comparison in finding the worst-case disruption \eqref{Q_min_sdp} (a) and in finding the optimal defense cost \eqref{R_sdp} (b) by the two scenarios where the first scenario solves \eqref{Q_min_sdp} and \eqref{R_sdp} with symmetric $P_\Ac$  while the second scenario solves the same optimization problems with diagonal $P_\Ac$. The notations $\{ V_1(\Mc,\Ac), T_1, R^\star_{1} \}$ represent the result obtained from the first scenario while $\{ V_2(\Mc,\Ac), T_2, R^\star_{2} \}$ represent the result obtained from the second scenario.
     The figures show that the optimal values obtained from the two scenarios are almost identical while the computation time consumed by the second scenario is extremely lower than that of the first scenario.}
    \label{fig:comparison}
\end{figure}

\section{Conclusion}
\label{sec:conclusion}
We investigated the security allocation problem in a networked control system under stealthy attacks. The uncertainty of the adversary's resources allowed us to formulate the objective of the defender as an expected defense cost based on the defender's rational information. Then, we proposed a method that supports the defender in optimally allocating the defense resources based on the expected defense cost. Moreover, by deeply investigating the worst-case disruption, an algorithm is presented 
to analyze the impact of stealthy attacks and to find the optimal monitor set in a scalable manner. 

~
\bibliographystyle{IEEEtran}
\bibliography{mybibfile}

\renewcommand{\thesection}{A.\arabic{section}}
\setcounter{section}{0}  
%
%
\section*{Appendix}
\section{Proof of Theorem \ref{th:general_sol}}
\label{app:thproof_general_sol}
From \eqref{R_min} and \eqref{Rk}, one has
\begin{align}
    &R(\Tc \,|\,\beta) = \min_{\Mc,\,|\Mc| \, \leq \, \beta}~ \sum_{\alpha_k \, \in \, \Tc} \phi_k  \, \bigg[\, c_s(\Mc) + Q(\Mc\,|\,\alpha_k) \bigg]. \label{R_min_alternative}
\end{align}
Next, we adopt the computation of $Q(\Mc\,|\,\alpha_k)$ in \textit{Lemma~\ref{lem:Q_computation}}. 
The SDP problem \eqref{Q_sdp} enables us to find the worst-case disruption \eqref{Q_max} for a given monitor set $\Mc$. In an attempt to find the best monitor set by substituting \eqref{Q_sdp} into \eqref{R_min_alternative}, we get into trouble with the nonlinearity of $\gamma^m_{\Ac} \, e_m$ in \eqref{Q_sdp} where $e_m$ can be represented as a variable node. This issue can be resolved by the following change of variables.

Given that the $N$-dimensional binary vector $z \in \{0,\,1\}^N$ is a representation of the monitor set $\Mc$, it has the following relationship:
\begin{align}
    z = \sum_{m \, \in \, \Mc} ~ e_m. \label{zM}
\end{align}
On the other hand, the sensor budget \eqref{sensor_budget} and the cost of utilized sensors \eqref{cost_sensor} imply the following constraints
\begin{align}
    |\Mc| = \textbf{1}^\top z \leq \beta, ~
    c_s(\Mc) = \sum_{m \, \in \, \Mc} \kappa^\top e_m = \kappa^\top z. \label{zM_cost}
\end{align}
Next, for a given attack action ${\Ac}$, an $N$-dimensional non-negative vector $\omega_{\Ac} \in \Rbb_{\geq 0}^N$ represents the following:
\begin{align}
    \omega_{\Ac} = \sum_{m \, \in \, \Mc}~\gamma^m_{\Ac} \, e_m, \label{oM}
\end{align}
which results in the relationship between the variables $\gamma^m_\Ac$ and $\omega_{\Ac}$:
\begin{align}
    \sum_{m \, \in \, \Mc}~\gamma^m_{\Ac} \, \delta_m  =  \sum_{m \, \in \, \Mc}~\gamma^m_{\Ac} \, (\delta^\top e_m) = \delta^\top \omega_{\Ac}. \label{oMdelta}
\end{align}
Intuitively, 
by sharing the same monitor set $\Mc$ and leveraging the result in \textit{Lemma~\ref{lem:Q_inf}}, the variables $z$ and $\omega_{\Ac}$ satisfy:
\begin{align}
    \omega_{\Ac} \, \leq \, \hat V_\infty z, \, \forall \, {\Ac} \in \Sc_{\alpha_k}, \, \forall \, \alpha_k \in \Tc. \label{bigM}
\end{align}
Finally, by substituting \eqref{Q_sdp}, 
\eqref{zM_cost}-\eqref{bigM} into \eqref{R_min_alternative}, one obtains the mixed-integer SDP problem \eqref{R_sdp}. Moreover, since the optimization problem \eqref{Q_sdp} always admits a finite solution (see \textit{Lemma~\ref{lem:Q_bounded}}) and $\kappa^\top z \leq \beta \max(\kappa) < \infty$, the mixed-integer SDP problem \eqref{R_sdp} always admits a finite solution. \QEDB
\section{Proof of Theorem \ref{th:scalable:Q}}
\label{app:th_anders}
We next provide a result that aids the proof of Theorem \ref{th:scalable:Q}.
\begin{Proposition}\label{thm_anders}
Let us consider a continuous LTI system $\Sigma$ with a state-space model: $\dot x(t) = A x(t) + B u(t),~y_1(t) = C_1 x(t),~y_2(t) = C_2 x(t)$ 
where $A$ is a Metzler matrix,
${B} \in \Rbb^{n \times m}_{\geq0}, {C_1} \in \Rbb^{p_1 \times n}_{\geq0},$ and ${C_2} \in \Rbb^{p_2 \times m}_{\geq0}$
Let us consider a quadratic supply rate {$s(\cdot,\cdot) = ||\Gamma^{\frac{1}{2}} u||^2 + \norm{y_1}^2 - \norm{y_2}^2$ where $\Gamma$ is a positive diagonal matrix} and let $C_2^\top C_2 \geq C_1^\top C_1$ where the inequality applies element-wise. Then the following statements are equivalent. 
\begin{enumerate}
    \item For all trajectories of the system with $x(0)=0,$ we have $\int_{0}^T s(\cdot,\cdot) \geq 0$;
    \item There exists a diagonal matrix $P \in \Sbb^n_{>0}$ such that 
    \begin{equation}\label{LMI_anders}
        \begin{bmatrix}
            A^\top P + PA & PB \\ B^\top P & 0 
        \end{bmatrix} + M \leq 0,
    \end{equation}
    where $M = \begin{bmatrix}
        C_2^\top C_2 - C_1^\top C_1 & 0\\ 0 & {-\Gamma}
    \end{bmatrix}$.\QET
\end{enumerate}
\end{Proposition}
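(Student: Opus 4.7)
The plan is to establish the two implications of the equivalence separately. The direction (2) $\Rightarrow$ (1) is a standard storage-function argument, while the substantive content lies in (1) $\Rightarrow$ (2), which I would obtain by invoking the diagonal Kalman--Yakubovich--Popov (KYP) lemma for positive linear systems from \cite[Thm.~1]{rantzer2015kalman}.

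For (2) $\Rightarrow$ (1), I would introduce the candidate storage function $V(x) = x^\top P x$ with $P$ the positive diagonal matrix furnished by \eqref{LMI_anders}, so that $V(x) \geq 0$ for every $x$. Computing $\dot V = x^\top (A^\top P + PA) x + 2 x^\top P B u$ along trajectories of $\Sigma$, and pre- and post-multiplying \eqref{LMI_anders} by $[x^\top \; u^\top]$ and its transpose, would yield $\dot V \leq u^\top \Gamma u - x^\top (C_2^\top C_2 - C_1^\top C_1) x = s(u(t), y_1(t), y_2(t))$. Integrating over $[0,T]$ with $x(0) = 0$ and using $V(x(T)) \geq 0$ then gives $\int_0^T s(\cdot,\cdot) \, \mathrm{d}t \geq 0$, which is statement (1).

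For (1) $\Rightarrow$ (2), the plan is to first recast the supply rate as $s = \norm{\Gamma^{1/2} u}^2 - x^\top D x$ with $D \triangleq C_2^\top C_2 - C_1^\top C_1$, which is elementwise non-negative by the hypothesis $C_2^\top C_2 \geq C_1^\top C_1$. The data $(A, B, C_1, C_2)$ defines a positive system ($A$ Metzler, $B \geq 0$, $C_1, C_2 \geq 0$) and $\Gamma$ is a positive diagonal matrix; these are exactly the ingredients required to invoke \cite[Thm.~1]{rantzer2015kalman}. That theorem asserts, for positive-system data with an elementwise-non-negative effective state-penalty, the equivalence between the trajectory-integral inequality in (1) and the existence of a diagonal $P \in \Sbb^n_{>0}$ satisfying a KYP-type LMI---which, once the outputs are eliminated from the supply rate, is precisely \eqref{LMI_anders}.

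The main obstacle is conceptual: in its original form the supply rate contains a $+\norm{y_1}^2$ term, making the overall quadratic in $(y, u)$ sign-indefinite, so the diagonal-KYP result does not apply at face value. The key maneuver is to collapse the two output terms into a single state-penalty $D$ and invoke the hypothesis to ensure $D$ has the elementwise sign structure that Rantzer's theorem requires, namely elementwise non-negativity that aligns with the positive-orthant reachable set of the system. With this alignment in place, the equivalence becomes an instance of the cited result; the remaining task is a bookkeeping verification that the reduced KYP LMI coincides block-by-block with \eqref{LMI_anders}, which follows from the Schur/completion of squares identity already used in the (2) $\Rightarrow$ (1) direction.
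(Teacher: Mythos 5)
Your proposal is correct and follows essentially the same route as the paper: the paper's own ``proof'' of this Proposition is simply the citation of \cite[Theorem 1]{rantzer2015kalman} (the diagonal KYP lemma for positive systems), with a remark deferring the time-domain interpretation to \cite{megretski2000lecture}, which is exactly the reduction you perform for the direction (1) $\Rightarrow$ (2). Your explicit storage-function computation for (2) $\Rightarrow$ (1) and the collapsing of $\norm{y_1}^2 - \norm{y_2}^2$ into the elementwise-nonnegative state penalty $C_2^\top C_2 - C_1^\top C_1$ merely spell out steps the paper leaves implicit.
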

\begin{Remark}
    Proposition~\ref{thm_anders} follows from \cite[Theorem 1]{rantzer2015kalman} which is a specialized result of the KYP lemma for positive systems. However, the time-domain interpretation of the KYP lemma was not explicitly stated in the theorem statement. Readers interested in the time domain interpretation are referred to \cite{megretski2000lecture}. \QET
\end{Remark}

\textit{Proof of Theorem~\ref{th:scalable:Q}:} 
The proof will recall a result from \cite{rantzer2015kalman} (see Appendix~\ref{app:th_anders} for more details).
Consider the optimization problem \eqref{Q_sup} which is equivalent to the following dual problem 
    \begin{align}\label{eq:dual:pf}
    & \min_{ \{\gamma^m_{\Ac}\}_{\forall m \in \Mc},\, \psi_{\Ac}} ~~ \sum_{m \, \in \, \Mc} ~ \gamma^m_{\Ac}  \delta_m + E \, \textbf{1}^\top  \psi_{\Ac} \\ 
    \text{s.t.}~& 
    \eqref{sys:xa}-\eqref{sys:yma}, \, x^a(0) = 0, x^a(\infty) = 0, \gamma^m_{\Ac} \in \Rbb_{>0}, \psi_{\Ac} \in \Rbb^{|\Ac|}_{>0}, \non \\
    &  
    \norm{p^a}_{\Lc_2}^2 - \sum_{m \, \in \, \Mc} \gamma^m_{\Ac} \norm{y_m^a}^2_{\Lc_2} 
    - 
    \norm{\textbf{diag}(\psi_\Ac)^{\frac{1}{2}} \zeta}^2_{\Lc_2} \leq 0.
    \non
\end{align}
Let us define the following supply rate 
\begin{align}\label{eq:supply}
    \!\!\!
    s^a(\cdot,\cdot) \triangleq  \norm{\textbf{diag}(\psi_\Ac)^{\frac{1}{2}} \zeta}^2_{\Lc_2} \!+\! \sum_{m  \in  \Mc} \gamma^m_{\Ac} \norm{y_m^a}^2_{\Lc_2}  - \norm{ p^a}^2_{\Lc_2},
\end{align}
where $p^a(t)$ and $y_m^a(t)$ are the outputs of the system $\Sigma^a \triangleq (-L,B_\Ac,\begin{bmatrix}
    W\\\sum_{m\in \mathcal{M}} \sqrt{\gamma^m_{\Ac}} \, \textbf{diag}(e_m)
\end{bmatrix},0)$ and $W = \textbf{diag}([w_i])$. Since the system $\Sigma^a$ is internally positive, if $\Sigma^a$ satisfies $W^2 \succeq \sum_{m\in \mathcal{M}} \gamma^m_{\Ac} \, \textbf{diag}(e_m)$, the results of Proposition \ref{thm_anders} readily applies to our setting. Next we show that any feasible solution $\gamma^m_{\Ac}$ satisfies $W^2 \succeq \sum_{m\in \mathcal{M}} \gamma^m_\Ac \, \textbf{diag}(e_m)$: let this be hypothesis 1 (H1).

We know that 
$V_{\infty}(\Ac) \geq q,\, \forall q \in \Qc(\Mc, \Ac)$ defined in \eqref{Q_Ceas_set}. However, since the optimization problems \eqref{Q_sup} and \eqref{eq:dual:pf} are equivalent, it follows that 
\begin{align}
    V_{\infty}(\Ac) &\geq  \sum_{m \in \Mc} \gamma^m_\Ac \delta_m + E \, \textbf{1}^\top \psi_\Ac > \sum_{m \in \Mc} \gamma^m_\Ac \delta_m \non \\
    &> \min_{m \in \Mc} \delta_m \sum_{m \in \Mc} \gamma^m_\Ac > \min_{m \in \Mc} \delta_m ~ \max_{m \in \Mc} \gamma^m_\Ac, \non \\
\text{resulting in} \;&\left[ \min_{m \in \Mc} \delta_m \right]^{-1}~V_{\infty}(\Ac)  > \max_{m \in \Mc} \gamma^{m\star}_\Ac \label{eq2}
\end{align}

Combining \eqref{cond:W2_Hinf} and \eqref{eq2}, it follows that $W^2 \succeq \left[ \min_{m \in \Mc} \delta_m \right]^{-1} V_{\infty}(\Ac) \, I \succ \max_{m \in \Mc} \gamma^{m\star}_\Ac \, I$: proving H1. Thus, the results of Proposition \ref{thm_anders} apply to our problem setting.

Consider the supply rate defined in \eqref{eq:supply}. Then, the constraint of the optimization problem \eqref{eq:dual:pf} can be rewritten as $\int s^a(\cdot,\cdot) \geq 0.$
Using the equivalence between $1)$ and $2)$ of Proposition \ref{thm_anders}, we can rewrite the constraint $\int s^a(\cdot,\cdot) \geq 0$ as \eqref{LMI_anders} using the system matrices of $\Sigma_a$. This concludes the proof.  
\QEDB

\section{Proof of Lemma~\ref{lem:self_loop}}
\label{app:lempf_self_loop}
Let us assume that \eqref{cond:W2_Hinf} does not hold. In the following, we show that an increase of all the local self-loop gains in $\Theta$ strictly reduces the value of $V_{\infty}(\Ac)$, resulting in the fulfillment of \eqref{cond:W2_Hinf}. 

Let us add a positive diagonal matrix $\Delta \Theta \succ 0$ to the self-loop gain matrix $\Theta$, resulting in the changed Laplacian matrix, which can be denoted as $\tilde L \triangleq L + \Delta \Theta$. The problem \eqref{Q_sup_im_Hinf} corresponding to the changed Laplacian matrix $\tilde L$ can be computed by the following SDP, which is adapted from \eqref{Q_min_sdp}:
    \begin{align}
    &\tilde V_{\infty}(\Ac) = \min_{\psi_{\Ac} \, \in \, \Rbb^{|\Ac|}_{>0}, \, P^\Ac \, \in \, \Sbb^N_{\succeq 0}} ~  E ~ \textbf{1}^\top  \psi_{\Ac} \label{Q_sup_im_Hinf_SDP_hat} \\
    \text{s.t.}&~ \ba{cc}
    -\tilde L^\top P_{\Ac} - P_{\Ac} \tilde L + W^2 & P_{\Ac} B_\Ac \\
    B_\Ac^\top P_{\Ac} & -\, \textbf{diag}(\psi_{\Ac})
    \ea   \preceq 0. \non 
\end{align}
    Since $\tilde L$ is Hurwitz and $W$ is a positive diagonal matrix, the necessary condition 
    for \eqref{Q_sup_im_Hinf_SDP_hat} is $P_\Ac \succ 0$ (top left block).
    Thus, the constraint in \eqref{Q_sup_im_Hinf_SDP_hat} can be rewritten as follows:
    \begin{align*}
        \ba{cc}
    -L^\top P_{\Ac} - P_{\Ac}  L + \tilde W^\top \tilde W & P_{\Ac} B_\Ac \\
    B_\Ac^\top P_{\Ac} & -\, \textbf{diag}(\psi_{\Ac})
    \ea   \preceq 0,
    \end{align*}
    where $\tilde W^\top \tilde W = W^2 - \Delta \Theta  P^\Ac - P^\Ac \Delta \Theta \prec W^2$ by the symmetric property. 
    Then, the optimization problem \eqref{Q_sup_im_Hinf_SDP_hat} is the dual form of the following optimization problem, which is adapted from \eqref{Q_sup}:
    \begin{align}
    \tilde V_{\infty}( \Ac) \triangleq & \sup_{ \zeta}~
     \norm{\tilde W  W^{-1}  p^a}_{\Lc_2}^2 \label{Q_sup_im_Hinf_hat}
     \\
    \text{s.t.}~~~&
    \eqref{sys:xa}-\eqref{sys:yma}, \, x^a(0) = 0, \, x^a(\infty) = 0,\non \\
    &\norm{e_j^\top \zeta}^2_{\Lc_2} \leq E,~ \forall \, j  \in \{1,2,\ldots,|\Ac| \}. \non 
\end{align}
    From $\tilde W^\top \tilde W \prec W^2$, \eqref{Q_sup_im_Hinf}, and \eqref{Q_sup_im_Hinf_hat}, one has $\tilde V_{\infty}(\Ac) < V_{\infty}(\Ac)$ by construction, which completes the proof. \QEDB
}

\end{document}